\newtheorem{thm}{Theorem}[section]
\newtheorem{defn}[thm]{Definition}
\newtheorem{rem}[thm]{Remark}
\newtheorem{ass}[thm]{Assumption}
\newtheorem{lem}[thm]{Lemma}
\newtheorem{theorem}{Theorem}
\newcommand{\E}{\mathbb{E}}
\newcommand{\R}{\mathbb{R}}
\newcommand{\bbP}{\mathbb{P}}
\newcommand{\bbZ}{\mathbb{Z}}
\newcommand{\lambdahat}{\widehat{\lambda}}
\newcommand{\Deltahat}{\widehat{\Delta}}
\newcommand{\Bhat}{\widehat{B}}
\newcommand{\Lhat}{\hat{L}}
\newcommand{\Phat}{\hat{P}}
\newcommand{\Ltilde}{\tilde{L}}
\newcommand{\calB}{\mathcal{B}}
\newcommand{\calD}{\mathcal{D}}
\newcommand{\calI}{\mathcal{I}}
\newcommand{\calL}{\mathcal{L}}
\newcommand{\calM}{\mathcal{M}}
\newcommand{\calT}{\mathcal{T}}
\newcommand{\SBM}{\operatorname{SBM}}
\newcommand{\HSBM}{\operatorname{HSBM}}
\newcommand{\RMHSBM}{\operatorname{RMHSBM}}
\newcommand{\Binom}{\operatorname{Binom}}
\newcommand{\Bern}{\operatorname{Bern}}
\newcommand{\Var}{\operatorname{Var}}
\newcommand{\KL}{\operatorname{KL}}
\newcommand{\TV}{\operatorname{TV}}
\newcommand{\LCA}{\operatorname{LCA}}
\newcommand{\LCAdown}{\LCA \downarrow}
\newcommand{\BIC}{\operatorname{BIC}}
\newcommand{\BIChat}{\widehat{\BIC}}
\newcommand{\internalnodes}{\calI}
\newcommand{\descendants}{\calD}
\newcommand{\leaves}{\calL}
\newcommand{\depth}{\delta}
\newcommand{\distas}[1]{\mathbin{\overset{#1}{\kern\z\sim}}}
\newsavebox{\mybox}\newsavebox{\mysim}
\newcommand{\distras}[1]{%
	\savebox{\mybox}{\hbox{\kern3pt$\scriptstyle#1$\kern3pt}}%
	\savebox{\mysim}{\hbox{$\sim$}}%
	\mathbin{\overset{#1}{\kern\z\resizebox{\wd\mybox}{\ht\mysim}{$\sim$}}}%
}
\newcommand{\argmax}[1]{
\underset{#1}{\text{argmax}}
}
\begin{document}

\title{Testing for Repeated Motifs and Hierarchical Structure in Stochastic Blockmodels}
\author{Al-Fahad Al-Qadhi$^1$}
\author{Keith Levin$^2$}
\author{Vincent Lyzinski$^1$}

\affil{\small $^1$Department of Mathematics, University of Maryland, $^2$Department of Statistics, University of Wisconsin-Madison}

\maketitle

\begin{abstract}
 The rise in complexity of network data in neuroscience, social networks, and protein-protein interaction networks has been accompanied by several efforts to model and understand these data at different scales.
 A key multiscale network modeling technique posits hierarchical structure in the network, and by treating networks as multiple levels of subdivisions with shared statistical properties we can efficiently discover smaller subgraph primitives with manageable complexity. 
 One such example of hierarchical modeling is the Hierarchical Stochastic Block Model, which seeks to model complex networks as being composed of community structures repeated across the network.
 Incorporating repeated structure allows for parameter tying across communities in the SBM, reducing the model complexity compared to the traditional blockmodel. 
In this work, we formulate a framework for testing for repeated motif hierarchical structure in the stochastic blockmodel framework.
We describe a model which naturally expresses networks as a hierarchy of sub-networks with a set of motifs repeating across it, and we demonstrate the practical utility of the test through theoretical analysis and extensive simulation and real data experiments.
\end{abstract}

\section{Introduction} \label{sec:intro}

\newcommand\blfootnote[1]{%
  \begingroup
  \renewcommand\thefootnote{}\footnote{#1}%
  \addtocounter{footnote}{-1}%
  \endgroup
}

Networks, \!\!\!\blfootnote{\!\!\!\!\!\!\!\!\!Code and details can be found at https://github.com/alfahadalqadhi/RMHSBM}
which encode interactions among entities, are ubiquitous in the sciences, arising in fields as diverse as ecology \cite{ecology-fragility,ecology-trophic,primatology}, sociology \cite{BorMehBraLab2009,weakties}, economics \cite{economic-networks-challenges,argentine-banks}, and neuroscience \cite{Sporns2016,Sporns2022}.
As they have become more prevalent, network data has become increasingly complex, and the need to model networks at multiple scales is necessary for efficient modeling and analysis; see, for example, \cite{alqadhi2022subgraphnominationqueryexample}.
A key idea for understanding hierarchical networks is the idea of repeated motif structure \cite{lyzinski15_HSBM}, as this allows for parsimonious model descriptions and enhanced structure discovery in complex network settings.

Effective methods for statistical analysis of network data require that models accurately capture both large-scale network topology and fine-grained network structure \cite{li2022hierarchical}.
As a simple example, consider the social network of a large research university.
Within a department, one is likely to have faculty, staff, and graduate students.
These different sub-populations are likely to interact in ways that are (approximately) repeated across departments.
For example, faculty tend to interact preferentially with staff and other faculty, graduate students interact with specific staff and/or faculty, etc., and we expect similar patterns of this sort to manifest in many different departments across campus.
This suggests a model in which we allow connection probability patterns to be replicated across different parts of a network.
Beyond the presence of repeated high-level structures across a network, we may wish to allow hierarchical structures in our model.

Continuing with our example of modeling a university social network, within a single department, there are groups of students, staff, or faculty who interact differently with one another; e.g., groups of faculty who work on different research sub-areas, or cohorts of students who joined a program in different years. Zooming out to the level of modeling interactions across multiple universities, we may have institutions similar interaction patterns across their departments.
Such patterns may be indicative of certain features.
For example, they may share the role as the largest major in their respective universities, or an interest in the same areas of research.
This presents then two aims: to capture both hierarchical \emph{and} repeated structure in a network.
This motivates the focus of the present manuscript, the {\em repeated motif hierarchical stochastic blockmodel} (RMHSBM).
This model extends the wlel-studied stochastic blockmodel \citep[SBM][]{HolLasLei1983} to incorporate repeated structure into the hierarchical stochastic blockmodel \citep[HSBM][]{lyzinski15_HSBM}.

\subsection{Hierarchical structures in network science}

The presence of hierarchical structure in networks has long been observed and exploited in network science and statistical network inference \citep[see, for example,][]{clauset2006structural, peixoto2014hierarchical,clauset2008hierarchical,li2022hierarchical,gao2023hierarchical}. 
One of the early settings in which hierarchical structure was emphasized is the setting of graph clustering, as the presence of such structure motivates the application of hierarchical or agglomerative clustering methods rather than more classical techniques (e.g., $k$-means).
Early work to estimate the tree structure of the hierarchy in a network was often optimization based \citep[e.g.,][]{clauset2006structural,arenas2008analysis}.
Although these methods tend to be computationally efficient, theoretical optimality is difficult to guarantee.
Nonetheless, these methods are widely used in practice.
For example, \cite{clauset2008hierarchical} estimates the hierarchical structure in a metabolic network, a terrorist association network, and a food chain network.
They also demonstrate the utility of hierarchical structures for edge prediction using the resulting attachment probability parameters. 

Recent work has incorporated hierarchical structure into existing statistical network models, with the stochastic blockmodel and random dot product graph being a natural setting for this \citep[see, for example,][]{peixoto2014hierarchical,li2022hierarchical,amini2024hierarchical}.
In these settings, the HSBM is often posited as a low-rank latent space model, and various spectral methods can be used to approximate the latent vectors.
Then, recursive $k$-means clustering can be applied to retrieve the nested community structure.

One feature of hierarchical structure that we will emphasize in what follows is the role of the tree-structured partition in describing the hierarchy.  
From a generative model perspective, the recent work on $\mathbb{T}$-stochastic graphs \cite{fang2023mathbb} encompasses a wide range of statistical network models as special cases.
In these models, a latent tree structure drives network formation.
The leaves of this tree correspond to the vertices in the observed graph, and the probabilities of edges among these vertices are inversely related to the distances between the leaves on the tree. 
To infer underlying hierarchical tree structure in a network, \cite{li2022hierarchical} and \cite{lei2020consistency} use spectral clustering to partition the network into two subgraphs, and the process is repeated recursively until a stopping criterion is reached.
This partitioning yields a binary tree structure in which the root node represents the full graph, its children represent the two subgraphs obtained in the first partition, and so on.
The first obvious limitation of these methods is the restriction to binary trees, as in each step the graph is partitioned into exactly two subgraphs.
In \cite{li2022hierarchical}, it is further required that the tree is a complete binary tree: if one subgraph triggers the stopping criterion, other subgraphs at the same depth of the tree are not tested for further subdivisions.
\cite{lei2020consistency} do away with this requirement, allowing for a more general class of models.
Their method generalizes beyond binary trees and other symmetry assumptions, as it captures a much more general class of hierarchical graphs.

Often, these methods---and indeed most clustering methods---are posited in the setting of assortative models: they assume that connections to vertices in the same community are more likely than connections to vertices in other community at each level of the tree. 
In the setting of hierarchical structure, this assortativity manifests as follows.
For two nodes (from possibly different communities), the connection probability is inversely related to the distance between their communities on the partition tree. 
While this assortativity assumption is appropriate for many real networks, it need not always be present and methods to tackle disassortative networks (and the continuum between) are needed.
\cite{peixoto2014hierarchical} discusses a method, optimizing a posterior likelihood, which depends not on assortativity but instead on the similarity in the connection probabilities. 
In this work, we largely avoid assumptions around assortativity or disassortativity, as our focus is on assessing the presence or absence of repeated structure in a given partition structure.

\subsection{Motifs and repeated structures}
\label{sec:motif}

Alongside hierarchical structures, networks frequently exhibit repeated structures, often termed {\em motifs}.
The simplest notion of these repeated structures is that of isomorphic subgraphs \citep[see][for discussion and recent advancements on the graph isomorphism problem]{babai2016graph}.
Isomorphism between subgraphs is often too rigid in practice, and the stringent conditions of isomorphism between subgraphs can be relaxed.
One such relaxation can be done using a metric on the similarity between two subgraphs and a threshold for when two subgraphs are considered the similar enough \citep{raymond2002rascal,sussman2019matched,yang2023structural}.
Such a framework is useful in the context of random graphs, as two realizations of the same graph model need not be isomorphic but {\em are} likely to exhibit structural similarities.
The subgraph similarity search problem, analogous to subgraph isomorphism search, searches for subgraphs similar to a given structure. 
Recently, numerous efficient algorithms for motif detection and subgraph comparison have been developed with applications in protein-protein interaction (PPI) networks \citep{yuan2012efficient}, the Wikipedia database \citep{hong2015subgraph}, and the AIDS Antiviral Screen dataset \citep{shang2010similarity}, to name but a few.

In our present hierarchical network setting, motifs can represent repeated structures within the network \cite{lyzinski15_HSBM}.
This repetition can be envisioned as occuring at the generative model level, which allows for repetition of coarse structures (i.e., at the level of communities) while allowing variation at finer scales (i.e., within or among communities).
As an example, in the context of connectomics \citep{Sporns2016,Sporns2022}, while we see symmetry between brain hemispheres at gross anatomical scales, there is significant deviation at the local level \citep{saltoun2023dissociable}. 
In this work, we propose a generative process that can be used to describe different levels of refinement and similarity across levels of a hierarchy, statistical methodology to test for this repetition, and test these methods on simulated and real data.

\section{Background and Setup} \label{sec:setup}

We are concerned in this work with endowing the stochastic blockmodel with a recursive or tree-like structure, in which stochastic blockmodel-like {\em motifs} are repeated in the network.

\subsection{Stochastic Block Models}
\label{sec:sbm}

We define a graph $G=(V,E)$ by a set of vertices $V=[n]$ where $[n]=1,2,\dots,n$ and a collection of edges $E=\{\{i,j\}| i,j\in V\}$. For a set of vertices $V'\subset V$ the induced subgraph by $G$ on $V'$, denoted $G(V')$, is the graph $H=(V',E')$ where $E'=\{\{i,j\}\in E| i,j \in V'\}$.
We encode a network's structure in its adjacency matrix $A \in \R^{n \times n}$.
In the present work, we assume that the observed network is hollow, symmetric, and binary.
That is, the diagonal entries of $A$ are $0$ (i.e., there are no ``self edges''), $A = A^T$ and $A \in \{0,1\}^{n \times n}$.
We note, however, that most of the ideas presented here can be extended to the case of weighted networks and that for the purposes of asymptotic, the on-diagonal entries of $A$ can typically be ignored \citep[see, e.g.,][]{LevAthTanLyzYouPri2017,LevLodLev2022}.

The stochastic blockmodel \citep[SBM;][]{HolLasLei1983} is a widely used statistical model to describe network formation.
Under this model, each of the $n$ vertices belongs to one of the $K$ communities.
This community membership is specified by a vector $\tau \in [K]^n$, with the entries of $\tau$ drawn i.i.d.\ according to a categorical distribution specified by $\pi \in \Delta^{K-1}$, so that $\Pr[ \tau_i = k ] = \pi_k$ for all $k \in [K]$.
Conditional on community memberships $\tau$, the network edges are generated independently, with an edge joining vertices $i,j \in [n]$ with probability $\Pr[ A_{i,j} = 1 \mid \tau ] = B[\tau_i,\tau_j]$, where $B \in [0,1]^{K \times K}$ is a symmetric matrix that encodes the propensity of vertices to form edges based on their community memberships. 

\begin{defn}[Stochastic Blockmodel]
\label{def:sbm}
		We say that an $n$-vertex random graph $G$ with adjacency matrix $A$ is an instantiation of a stochastic blockmodel (SBM) with parameters $(n,K,B,\pi)$, and write $A\sim\SBM(n,K,B,\pi),$ if 
		\begin{itemize}
			\item[i.] The membership probability vector $\pi\in\R^K$ satisfies $\pi_i\geq 0$ for all $i\in[K]$, and $\sum_i\pi_i=~1$;
			\item[ii.] For each vertex $v\in V(G)$, its community membership $\tau_v$ is drawn from $\pi$, independently of all other vertices.
		Given the vector of memberships $\tau \in [K]^n$, we may write $V$ as the disjoint union of $K$ blocks, $V=\calB_1 \sqcup \calB_2 \sqcup \cdots \sqcup \calB_K$.
		\item[iii.] The probability attachment matrix $B\in[0,1]^{K\times K}$ is a symmetric matrix.
		Conditional on the block assignment vector $\tau$, for each pair of vertices $\{i,j\}\in\binom{V}{2}$, $A_{ij}=A_{ji}\sim \Bern(B[\tau_i,\tau_j])$ independently over all $i < j$.
		\end{itemize}
    \end{defn}

In some cases, we will be interested in the behavior of the network conditional on $\tau$ or with $\tau$ chosen deterministically and held fixed.
To handle this, we define a variation of the SBM in which the community assignments are not drawn randomly.
\begin{defn}[Conditional SBM] 
Let $K$ be a positive integer and let $B \in [0,1]^{K \times K}$ be symmetric.
Fix a community membership vector $\tau \in [K]^n$ and generate the upper diagonal entries of a symmetric, hollow adjacency matrix $A$ according to $A_{i,j} \sim \Bern( B[\tau_i,\tau_j] )$, independently over all $i < j$.
We say that the resulting network $A$ is generated according to a {\em conditional SBM} with communication matrix $B$ and community membership vector $\tau$, and write $A \sim \SBM(B, \tau)$.
\end{defn}

\noindent We next proceed to formalize the Hierarchical SBM.

\begin{figure}[t!]
\center
\includegraphics[width=0.6\textwidth]{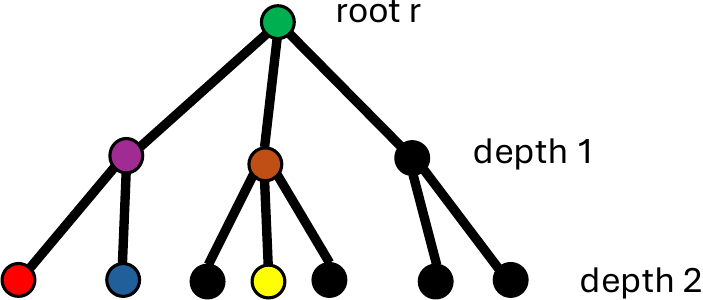}
\caption{An example of a depth 2 rooted tree.  Here (denoting nodes by their color for simplicity) $\LCA$(red,blue)=purple, and $\LCA$(yellow,blue)=green.
We also have that $\LCAdown$(red,blue)$=\{$red,blue$\}$ and $\LCAdown$(yellow,blue)$=\{$purple,orange$\}$.}
\label{fig:tree}
\end{figure}

\subsection{Formalizing the Hierarchical SBM}
\label{sec:hsbmdef}

The hierarchical stochastic blockmodel (HSBM) will allow us to model networks at different levels of resolution.
At one extreme, the Erd\H{o}s-R\'{e}nyi (ER) model, i.e., $\SBM(n,1,p,1)$, has all off-diagonal entries of $A$ (up to symmetry) drawn as independent Bernoulli random variables with the same probability.
Seen as an SBM, this corresponds to all vertices belonging to the same community.
In practice, we often see that there are groups of vertices among which edges have a higher probability than the connections with vertices in other groups, and this idea naturally leads to the stochastic blockmodel. 
Rather than the ER model, which can blur the structure in the adjacency matrix, the SBM allows us to increase the network resolution, at the cost of requiring us to estimate on the order of $K^2$ parameters instead of a single parameter $p$ in the Erd\H{o}s-R\'{e}nyi model.
To balance the competing demands of model expressiveness and model simplicity, we introduce the hierarchical SBM (HSBM).

Formalization of the HSBM has previously been pursued in \cite{lyzinski15_HSBM,li2022hierarchical}.
Here, we adopt the following convention to define the Hierarchical SBM.
Let vertex set $V$ and edge set $E$ be such that $(V,E)$ is a tree.
For a vertex $r \in V$, let $T = ( V, E, r )$ denote the tree rooted at $r \in V$.
Given a rooted tree $T$, we define the following notation:
\begin{itemize}
    \item We let $\leaves(T)$ denote the leaves of $T$ and $\internalnodes(T)$ denote the internal nodes of $T$, so that $V = \leaves(T) \cup \internalnodes(T)$.
\item For any node $t \in V$, we write $\descendants(t) \subseteq V$ to denote the descendants of $t$. We take $t\notin\descendants(t)$ by convention.
\item For a node $t \in V(T)$, we let $\depth(t)$ denote its depth, i.e., the length of the path from the root to $t$. By convention,  $\depth(r)=0$.  If $\depth(v)=k$ for all $v\in\leaves(T)$, then we say the tree is a \emph{depth-k} tree.
\item For two nodes $t_1,t_2$ in tree $T$, we let $\LCA(t_1,t_2)$ denote their lowest common ancestor (i.e., the common ancestor with the largest depth).
\item For two nodes $t_1,t_2$ in tree $T$ we let $\LCAdown(t_1,t_2) \in \binom{ V(T) }{ 2 }$ denote the two children of $\LCA(t_1,t_2)$ encountered on the paths from $\LCA(t_1,t_2)$ to $t_1$ and $t_2$.
\end{itemize}
See Figure~\ref{fig:tree} for an example of this notation in a simple depth-2 tree structure.

Similar to the \emph{$\mathbb{T}$-stochastic graphs} \citep{fang2023mathbb}, the Hierarchical SBM will be defined with an implicit tree structure.
In particular, the tree will encode block memberships at different levels of the hierarchy. 
We call the communities in these different levels {\em metablocks} to emphasize that community structure in this model is defined at mutiple levels of granularity, and we may contain ``communities of communities''.
While the Hierarchical SBM will be defined recursively below, we emphasize the underlying tree structure for clarity.

\begin{defn}[Hierarchical SBM] \label{def:HSBM}
We say that an $n$-vertex random graph $G$ with adjacency matrix $A$ is an instantiation of a $L$-level Hierarchical Stochastic Block Model (HSBM) with parameters 
\begin{itemize}
    \item[-]$n\in\mathbb{Z}^+$, the number of vertices in the graph;
    \item[-]$K\in\mathbb{Z}^+$, the number of metablocks at level-$1$ (i.e., at depth 1 in the tree);
    \item[-]$B\in[0,1]^{K\times K}$ (a symmetric matrix), the level-$1$ metablock-to-metablock communication matrix
    \item[-] $\pi\in\Delta^{K-1}$ (the $K-1$ simplex), the level-$1$ metablock probability assignment vector;
    \item[-] $\{S_i\}_{i=1}^{K}$, a collection of $(L-1)$-level HSBMs
\end{itemize} 
and write $G\sim \HSBM(n,K,B,\pi,\{S_i\}_{i=1}^{K})$ if the following conditions hold:
\begin{itemize}
\item[i.] The vertex set $V=V(G)$ is the disjoint union of $K$ metablocks $V=\calB_1 \sqcup \calB_2 \sqcup \cdots \sqcup \calB_{K}$, where each vertex $v\in V$ is independently assigned to a metablock according to $\pi$.
For each vertex $v\in V(G)$, let $\tau_v$ be the metablock that $v$ is assigned to.
\item[ii.] Conditional on the metablock assignment vector $\tau \in [K]^n$, for each pair of vertices $\{i,j\}$ with $\tau_i\neq \tau_j$, 
$$A_{ij}\stackrel{\text{ind.}}{\sim}\Bern(B[\tau_i,\tau_j]).$$
\item[iii.] For each $i\in[K]$, conditional on the block assignment vector $\tau=(\tau_v)$ and $|\mathcal{B}_i|=n_i>0$,  $S_i$ denotes the model for the $i$-th metablock at level $2$ in the HSBM.
$S_i$, the model for $G[\mathcal{B}_i]$, is a $(L-1)$-level HSBM of the form $\HSBM(n_i,K_{(i)},B_{(i)},\pi_{(i)},\{S_{(i,j)}\}_{j=1}^{K_{(i)}})$, each $S_{(i,j)}$ is itself an $(L-2)$-level $\HSBM(|\mathcal{B}_{i,j}|,K_{(i,j)},B_{(i,j)},\pi_{(i,j)},\{S_{(i,j,k)}\}_{k=1}^{K_{(i,j)} })$, and so on.
Moreover, conditional on the block assignment vector 
		$\tau=(\tau_v)$,
		the collection 
		$\{G\left[\mathcal{B}_k\right]\}_{k=1}^{K}$ are mutually independent.
          \item[iv.] The structure \emph{within} each of the models $\{S_i\}_{i=1}^K$ represents the level $2$ HSBM structure, the structure \emph{within} each of the $\left\{\{S_{(i,j)}\}_{j=1}^{K_{(i)}}\right\}_{i=1}^K$ the level $3$ structure, and so on.  
  At level $L$, the models are indexed by $\vec{v} \in \mathbb{Z}^L$, and each such model $S_{\vec v}$ is an SBM as specified by Definition~\ref{def:sbm}.
  We will denote the block membership vector associated with $S_{\vec v}$ via $\tau_{\vec v}$, so that for a vertex conditioned on being in $u\in\mathcal{B}_{\vec v}$, we have $\tau_{\vec v}(u)=k$ with probability $\pi_{\vec v}(k)$ for each $k$ in $[K_{\vec v}]$.
        \end{itemize}
 \end{defn}
\noindent In the above definition, the notation for denoting subsequent levels of the hierarchy becomes cumbersome quickly.
To remedy this, we use vector notation to subscript the levels of the HSBM.
If at level $k\in[L]$, the metablock model $S_{\vec v}$ for $\vec v$ in $(\mathbb{Z}^+)^k$, represents the $v_k$-th metablock of the $v_{k-1}$-th metablock of the $\cdots$ of the $v_2$-th metablock of the $v_{1}$-th metablock.
This allows for more compact notation when specifying the structure of the HSBM.
For example, at level $3$ in the hierarchy we write $S_{\vec v[1:3]}=S_{(v_1,v_2,v_3)}$ for the model of metablock $\calB_{(v_1,v_2,v_3)}$ with connectivity matrix and $B_{(v_1,v_2,v_3)}$ among the sub-metablocks $\{S_{(v_1,v_2,v_3,i)} \}_{i=1}^{K_{(v_1,v_2,v_3)}}$.

\begin{rem}
\label{rem:htos}
\emph{Note that an SBM according to Definition \ref{def:sbm} can be seen as a $L$-level HSBM (for any $L$) by simply having the level-1 structure encode the SBM blocks, and letting the HSBMs at subsequent levels be 1-block Erd\H os-R\'enyi graphs.  As such, in item iii. of the definition above each $S_i$ could be a standard SBM (according to Definition \ref{def:sbm}) extended to be level $L-1$.}
\end{rem}

\begin{rem}
\emph{In Definition \ref{def:HSBM}, there is an implicit tree structure $T$ defining the hierarchy in the network.
For an $L$-level HSBM, this corresponds to a depth-$L$ rooted tree with $K$ vertices at depth $1$.  These vertices represent the $K$ metablocks $\{S_{(i)}\}_{i=1}^K$.
At depth $2$, there are $\sum_{i=1}^K K_{(i)}$ nodes, with the node representing $S_{(i)}$ at depth $1$ having $K_{(i)}$ offspring in depth $2$.
At depth $3$, there are $\sum_{i=1}^K \sum_{j=1}^{K_{(i)}} K_{(i,j)}$ nodes, with the node representing $S_{(i,j)}$ at depth $2$ having $K_{(i,j)}$ offspring in depth $3$.
We then define the tree inductively to depth $L$.
Given this tree, we define the \emph{traversal function}
$\mathcal{T}:V\rightarrow \mathbb{Z}^{L}$, where  $\mathcal{T}(i)=\vec v$ denotes that vertex $i$ belongs to metablock $S_{\vec v}$ at level $L$ in the hierarchy.
The depth $\delta(t)$ then refers to the index at which the membership vector $\mathcal{T}(i)$ is observed, i.e., $\mathcal{T}(i)_{\delta(t)}$.}
\end{rem}

 \noindent Note that the HSBM is a special case of the SBM, but the hierarhical structure allows us to use fewer parameters to describe the model. For example, the number of parameters for a general SBM with $K>4$ communities is $(K^2-K+2)/2$, meanwhile an \textit{HSBM} with two level-$1$ communities each containing $K/2$ level-$2$ communities would have $(K^2-2K+12)/4$ parameters.
 The hope is that this reduction in complexity facilitates better estimation of the parameters.
 
 We observe that any $L$-level HSBM model can always be written as an SBM, as we now sketch.
Let $G\sim \HSBM(n,K,B,\pi,\{S_{(i)}\}_{i=1}^{K})$.
 Then $G\sim \SBM(n,K^*,B^*,\pi^*)$ where $K^*$ is the total number of possible block assignments at the lowest level, more precisely $K^*=|\calL(T)|$;
 for block $S_{\vec v}$ representing a leaf in the tree, the block membership is given by 
\begin{equation*}
\left(\pi_{\vec{v}(1)}\prod_{i=1}^{L-1}\pi_{\vec{v}[1:i]}(\vec{v}(i+1))\right)\pi_{\vec{v}} .
\end{equation*}
That is, the product of the probabilities along the path from the root of the tree to $S_{\vec v}$.
For two distinct blocks $S_{\vec v}$ and $S_{\vec w}$, for which $\vec w\neq \vec v$, which correspond to distinct leaves in the tree, the probability of an edge between vertices in these two blocks is given by the following:
 Let $i^*(\vec{v},\vec{w})=\min\{i:v(i)\neq w(i)\}$, so that $i^*(\vec{v},\vec{w})-1$ is the level of $\LCA(\vec v,\vec w)$ in the tree.
Write  $i^*=i^*(\vec{v},\vec{w})$ for ease of notation, and if $\vec v=\vec w$, then $i^*=0$ by convention.
We have
 \begin{equation*}
 B_{\vec v,\vec w}=
      B_{\vec{v}[1:(i^*-1)]}[\vec{v}(i^*),\vec{w}(i^*)]=B_{\vec{w}[1:(i^*-1)]}[\vec{v}(i^*),\vec{w}(i^*)],\text{ if }i^*>0.
 \end{equation*}
 If $\vec v=\vec w$, then the connection probabilities among vertices in $S_{\vec v}$ are dictated by the block probability matrix $B_{\vec v}=B_{\vec w}$.

\begin{rem} \label{rem:TSG}
    $\mathbb{T}$-Stochastic graphs, as developed by \cite{fang2023mathbb}, can describe \emph{HSBM}s, but restricted only to graphs that are 
  \begin{itemize}
        \item Weakly Assortative: $B_{\vec{u},\vec{v}}<B_{\vec{w},\vec{v}}$ if $\LCA(\vec{w},\vec{v})\in \LCAdown(\vec{u},\vec{v})$
        \item Assortative: $B_{\vec{u},\vec{v}}<B_{\vec{u}_*,\vec{v}_*}$ if $\delta\left(\LCA(\vec{u},\vec{v})\right)< \delta\left(\LCA(\vec{u}_*,\vec{v}_*)\right)$
        \item Disassortative: $B_{\vec{u},\vec{v}}>B_{\vec{u}_*,\vec{v}_*}$ if $\delta\left(\LCA(\vec{u},\vec{v})\right)< \delta\left(\LCA(\vec{u}_*,\vec{v}_*)\right)$
    \end{itemize}
In particular, this can be achieved using a Bernoulli edge distribution where $B_{\vec{u},\vec{v}}=ce^{-d(\vec{u},\vec{v})}$ for a positive constant $c$, where $d(\vec{u},\vec{v})$ is the length of the shortest path between $\vec{u}$ and $\vec{v}$ on the tree $T$.
Using our notation, let $k=\delta\left(\LCA(\vec{u},\vec{v})\right)$, $L=\delta\left(i\right)$ for $i\in \mathcal{L}$ and $\ell\left(\vec{u}\right)$ be the length of vector $\vec{u}$ then $d(\vec{u},\vec{v})=\ell\left(\vec{u}\right)+\ell\left(\vec{v}\right)-2k=2(L-k)$.
Our definition allows for models that are not weakly assortative, assortative, or disassortative, furthermore, our application in Section \ref{sec:expts} is on a network that does not satisfy any of these conditions. Otherwise, their method would be good for inferring the hierarchical structure.
\end{rem}

\begin{rem} \label{rem:RDPG}
A variation of the HSBM based on the \emph{Random Dot Product Graph} \citep[RDPG;][]{AthFisLevLyzParQinSusTanVogPri2018} can be defined as follows.
Let $\Omega_d\subset\mathbb{R}^d$ where for any $x,y\in \Omega_d$, $x^Ty\in [0,1]$ with a distribution $F$ over $\Omega_d$.
Draw latent vectors $x_{\vec{u}}\stackrel{\text{i.i.d.}}{\sim}F$ for each leaf in the tree $T$, then set $B_{\vec{u},\vec{v}}=(x_{\vec{u}})^Tx_{\vec{v}}$.
\cite{lyzinski2016community} show that the spectral decomposition of the adjacency matrix retrieves the latent vectors associated with the probability matrix $\mathbb{E}(A)$, a process otherwise known as Adjacency Spectral Embedding \citep[ASE][]{SusTanFisPri2012}.
Since the HSBM has repeated vectors in the rows and columns of the probability attachment matrix, the HSBM uses a low-rank (in particular, rank-$K^*$) latent space model.
\end{rem}

\begin{rem}
\label{rem:mxdmmbrship}
 In this paper, we focus on exclusive membership (i.e., at each level, each vertex belongs to one and only one community), but a connection can be made by considering an edge distribution model that accounts for overlapping or mixed memberships.
In essence, from the perspective of edge distributions, an overlapping community is one where the out-of-community connections are consistent for all members but with differing edge distribution for different groups inside the community. What this fails to capture is a further parametric simplification in the case where the out-of-community connection distributions are consistent with the rest of the members of a community for some of the other communities, but differs for others.
In our case, a model is only accepted when that subgraph is put on its own with its own parameters, though one could define an additional parametric adjustment for each subgraph of a given community and discard the parameters that insignificantly differ from 0 in an additional hypothesis test. Details and proofs of the consistency of such a scheme are left as a consideration for future work. \cite{zhang2020detecting} presents a method based on the spectral decomposition of the network adjacency matrix and the $k$-median algorithm, with asymptotic consistency. They also demonstrate favorable performance when compared to other methods, while capturing a very general class of models for overlapping communities.
\end{rem}

Similar to the conditional SBM, sometimes we are interested in the behavior of the HSBM the community structure is known or non-random.
Thus, we define the conditional HSBM entirely analogously to Definition~\ref{def:HSBM}, except that the tree structure $T$ and traversal function $\mathcal{T}$ are known a priori.
We write $$G\sim \HSBM(n,K,B,\{S_i\}_{i=1}^{K}, T, \mathcal{T}).$$
Stated simply, instead of the membership of $\mathcal{B}_{\vec{v}[1:i]}$ being randomly assigned according to probability membership vector $\pi_{\vec{v}[1:i-1]}$, these memberships are given a priori via the traversal function $\mathcal{T}$ which assigns each vertex to its path from root to node in the tree.

\begin{figure}[t!]
\includegraphics[width=1\textwidth]{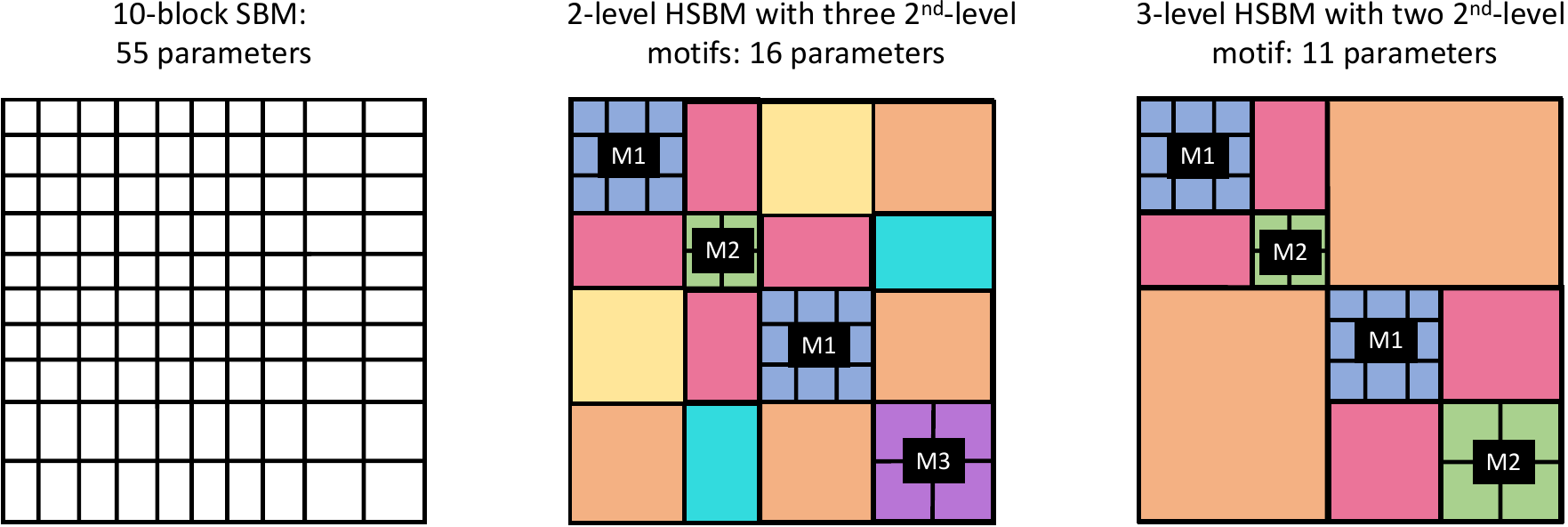}
\caption{Example of refinement of block structure in the HSBM. 
In the center and right figures, each unique color represents a distinct parameter in the model.
The middle (HSBM) model is a refinement of the right (HSBM) model, and the left (SBM) model is a refinement of the middle model.
We can use the BIC formulation below to compare any pair of these three network models.}
\label{fig:hsbm:ex}
\end{figure}

\subsection{Repeated Motif HSBM}
\label{sec:RMHSBM}

The network structure we are after is one where the models for the metablocks (i.e., the $S_{\vec{v}[1:i]}$) in the HSBM are restricted to come from a set of $\Sigma(t)$ possible models.
This is captured below in Definition~\ref{def:motif}, but we first some helpful notation.
Let $G\sim \SBM(n,K,B,\pi)$.
We write $\mathcal{M}(G)=(K,B,\pi)$ to be the \emph{order-independent model} associated with $G$.
Similarly, if $G\sim\HSBM(n,K,B,\pi,\{S_i\}_{i=1}^{K})$,
we write 
\begin{equation*}
\mathcal{M}(G)=(K,B,\pi,\{\mathcal{M}(S_i)\}_{i=1}^{K})
\end{equation*}
to be the \emph{order-independent model} associated with $G$.  
If $G_1$ is an $L^{(1)}$-level HSBM, and $G_2$ is an $L^{(2)}$-level HSBM where, without loss of generality, $L^{(2)}\leq L^{(2)}$), then we say that $G_1$ and $G_2$ are \emph{equivalent models} if, extending $G_2$ to be an $L^{(1)}$-level HSBM as in Remark~\ref{rem:htos}, we have that $\mathcal{M}(G_1)=\mathcal{M}(G_2)$.
Note that if $\mathcal{M}(G_1)=\mathcal{M}(G_2)$, then necessarily for all $\vec{v}\in\mathbb{Z}^{\ell}$ ($\ell\leq L^{(1)}$) we have (where $S^{(1)}_\bullet$ and $S^{(2)}_\bullet$ denote metablock models for $G_1$ and $G_2$, respectively) that $\mathcal{M}(S^{(1)}_{\vec v})=\mathcal{M}(S^{(2)}_{\vec v})$.
\begin{defn}[Repeated Motif HSBM]
\label{def:motif}
Let $G\sim \HSBM(n,K,B,\pi,\{S_i\}_{i=1}^{K})$ be an $L$-level HSBM.
We say that $G$ is a Repeated Motif HSBM (RMHSBM) with motifs $\{\calM_{i}\}_{i=1}^M$, at level-$\tilde L$ (where $\tilde L<L$), written $G\sim \RMHSBM(n,K,B,\pi,\{S_i\}_{i=1}^{K},\tilde L,\{\calM_{i}\}_{i=1}^M)$, 
if the following holds:
\begin{itemize}
    \item[i.] For each metablock model $S_{\vec{v}}$ of $G$ (where $\vec{v}\in \mathbb{Z}^{\tilde{L}}$), there is a $j$ such that $\mathcal{M}(S_{\vec{v}})=\mathcal{M}_j$;
    \item[ii.] If metablock models $S_{\vec{v}}, S_{\vec{w}}$ of $G$, where $\vec{v},\vec{w}\in \mathbb{Z}^{\Ltilde}$ are such that $\vec{v}[1:\tilde{L}-1]=\vec{w}[1:\tilde{L}-1]$, and $\mathcal{M}(S_{\vec{v}})=\mathcal{M}(S_{\vec{w}})$, then for all metablock models $S_{\vec{u}}$ where $\vec{u}\in \mathbb{R}^{\tilde{L}}$ are such that $\vec{v}[1:\tilde{L}-1]=\vec{u}[1:\tilde{L}-1]$, we have that $$B_{\vec{v}[1:\tilde{L}-1]}[w(\tilde{L}-1),u(\tilde{L}-1)]=B_{\vec{u}[1:\tilde{L}-1]}[v(\tilde{L}-1),u(\tilde{L}-1)];$$ i.e., all metablocks with the same parent node as $S_{\vec{v}}, S_{\vec{w}}$ have the same probability of forming an edge to vertices in $S_{\vec{v}},$ and $S_{\vec{w}}$.
\end{itemize}

\end{defn}
\noindent Note that the conditional (conditioning on the traversal structure) HSBM with repeated motifs is defined analogously, and is denoted via (note that the parameter $\mathfrak{m}$ is explained below)
\begin{equation*}
G\sim \RMHSBM(n,K,B,\{S_i\}_{i=1}^{K}, T, \mathcal{T},\tilde L,\mathfrak{m},\{\calM_{i}\}_{i=1}^M).
\end{equation*}
Note that we will assume in the sequel that this conditional structure \emph{explicitly} identifies which motif $\mathcal{M}_i$ each metablock at level $\tilde L$ is equivalent to, and this is represented by the metablock mapping parameter:
\begin{equation*}
\mathfrak{m}:\text{metablocks at level }\tilde L\mapsto \{\calM_{i}\}_{i=1}^M.
\end{equation*}
One of the principal advantages of the repeated motif HSBM framework is that, given the structure of the motifs, the number of parameters for the HSBM model are greatly reduced.  
This is because all substructures equivalent to a given motif share the same block probability parameters and block membership parameters; these need only be estimated once per represented motif instead of separately over each substructure.
See Figure \ref{fig:hsbm:ex} for example.

As discussed above, our definition of the RMHSBM is motivated by the hemisphere-level similarity in connectomes.  
We can model this structure via a two-level, one-motif RMHSBM.
The tree structure for this RMHSBM is displayed below:
\begin{center}
    \label{fig:connectome}
    \includegraphics[width=0.6\textwidth]{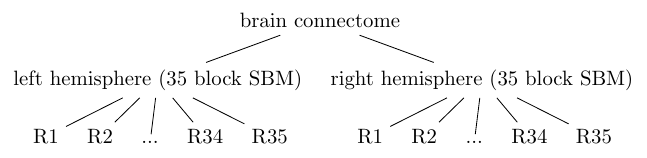}
\end{center}

The tree is two levels, with the first level representing the hemisphere structure of the brain, and the second level the 35 regions of interest (R1, R2, $\ldots$, R35) within each hemisphere, for a total of 70 leaves in the tree.
The meta-communities represented by these regions of interest are modeled as simple Erd\H{o}s-R\'enyi graphs.
The hypothesized motif structural model posits that each hemisphere is an instantiation of a single, twice-repeated, 35 block SBM motif.
We will be comparing \emph{RMHSBM}s with \emph{SBM}s, and to do this sensibly, we need to ensure that the two models are compatible; for that purpose, we give the following definition:
\begin{defn}[$\tau-\calT$ Compatibility]
A node assignment function $\tau:[n]\rightarrow[K]$, a traversal function $\calT$ for a tree $T$ with $K$ leaves are compatible when $\calT(i)=\calT(j)$ if and only if $\tau_i=\tau_j$.
\end{defn}
In the case of the brain connectome, for example, this means that $\tau:[n]\rightarrow[70]$ and for all nodes $v\in Rk$ on the left hemisphere according to $\calT$, then $\tau(v)=k$, and for all nodes $v\in Rk$ on the right hemisphere according to $\calT$, then $\tau(v)=35+k$.

\section{Main Results} \label{sec:results}
Consider the problem of testing whether a graph $G$ comes from a $K^*$ block SBM (with block probability matrix $B^{(1)}$ and known block membership function $\tau$) or from the model 
\begin{equation*}
\RMHSBM(n,K,B,\{S_i\}_{i=1}^{K}, T, \calT,\Ltilde,\mathfrak{m},\{\calM_{i}\}_{i=1}^M),
\end{equation*}
where $T$ has $K^*$ leaves, and the traversal $\calT$ is compatibe with $\tau$. 
From this tree, traversal and motif structure, the RMHSBM model has a (potentially) reduced set of parameters versus the fully general SBM.
For example, all structures in the HSBM corresponding to a single motif share a common set of parameters, and all inter-block SBM connectivity parameters across a layer in the RMHSBM are merged in the RMHSBM as dictated by the flat connectivity across layers in the HSBM.
Let the parameter set given by the tree, traversal and motif of the model be denoted via $\Gamma=\Gamma_{\RMHSBM(T,\calT,\{\calM_{i}\}_{i=1}^M)}$, and for each parameter $\gamma\in \Gamma$, let $\gamma_B$ denote the set of structures in the SBM that are stochastically identical in the RMHSBM and correspond to a single parameter in RMHSBM model space to form $\gamma$.
We remind the reader that the structures merged in $\gamma$ could correspond to a merging of inter-block SBM connectivity structure dictated by the flat connectivity across layers in the HSBM or the merging of structures that are part of a common motif.

We will consider two data settings in which to formulate and test our hypotheses.  
In both settings, we have a population of $N\geq 1$ graphs that serve as our data.
We are essentially testing the goodness-of-fit of the RMHSBM model; tests for SBM goodness-of-fit have been proposed in the literature, see for example \cite{lei2016goodness,karwa2024monte,jin2025network}.
In the first setting, we assume that there is no variation between block parameters in the population, and all elements of the population either follow the $K^*$ block SBM (with fixed $\tau$ and fixed communities $\{\mathcal{B}_i\}$) or the same repeated motif HSBM 
\begin{equation*}
    \RMHSBM(n,K,B,\{S_i\}_{i=1}^{K}, T, \calT,\Ltilde,\mathfrak{m},\{\calM_{i}\}_{i=1}^M)
\end{equation*}
as outlined above.
Testing between the two models can be written as follows, (where $B^{(0)}_\gamma$ denotes the  probability parameter for the structures merged in $\gamma$, and $A$ is the adjacency matrix of a graph in our (assumed i.i.d.) population): 
\begin{equation}
\label{eqn:HSBMnovar}
    \begin{aligned}
        H_0:\, &\text{ For all } \gamma\in \Gamma,\text{ for all } (\ell,k)\in \gamma_B, \text{ and } (v,v')\in\calB_{\ell}\times \calB_k,\ A_{v,v'}\stackrel{ind.}{\sim} \Bern(B^{(0)}_\gamma)\\
        H_1:\, &\text{ For all } (\ell,k)\in[k]^2\text{ and } (v,v')\in \calB_\ell\times \calB_k\text{, } A_{v,v'}\stackrel{ind.}{\sim} \Bern(B^{(1)}_{\ell k}). 
    \end{aligned}
\end{equation}
For every $\gamma\in \Gamma$, let $n_\gamma=\sum_{(l,k)\in \gamma_{B}}n_{lk}$ denote the effective sample size used to estimate $B^{(0)}_\gamma$, and (where $\calB_{\ell,k}=\calB_{\ell}\times \calB_{k}$ if $\ell\neq k$ and $\calB_{\ell,\ell}\binom{\cal{B}_\ell}{2}$ )
    \begin{equation*}
        \Bhat^{(0)}_{\gamma}=\frac{1}{n_{\gamma}}\sum_{(\ell,k)\in \gamma_{B}}\sum_{(v,v')\in \calB_{\ell,k}}A_{v,v'},\quad \text{ and }\quad
    \Bhat^{(1)}_{lk}=\frac{1}{n_{\ell k}}\sum_{(v,v')\in \calB_{\ell,k}}A_{v,v'}.
    \end{equation*}
The log likelihood ratio for the global test of $H_1$ versus $H_0$ in the setting without variation between blocks is then given by
\begin{equation} \label{eq:llr}
-2\lambda_T=2\sum_{\gamma\in\Gamma_T}\sum_{(\ell, k)\in \gamma_B}n_{\ell k}\log\left(\frac{1-B_{\ell k}^{(1)}}{1-B_{\gamma}^{(0)}}\right)+n_{\ell k} \hat B_{\ell k}^{(1)}\log
\left(\frac{ B_{\ell k}^{(1)}(1-B_{\gamma}^{(0)})}{B_{\gamma}^{(0)}(1-B_{\ell k}^{(1)})}\right),
\end{equation}
with maximum likelihood estimator as specified in the following lemma, which is proven in Appendix \ref{sec:lem}.
\begin{lem}
\label{lem:ml}
Under the hypotheses in Equation~\eqref{eqn:HSBMnovar}, the maximum likelihood estimator for $B^{(1)}_{\ell k}$ is $\Bhat^{(1)}_{\ell k}$ and for $B^{(0)}_{\gamma}$ is $\Bhat^{(0)}_{\gamma}$.
\end{lem}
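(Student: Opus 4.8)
The plan is to observe that, under either hypothesis, the joint log-likelihood of the data decouples into a collection of independent one-dimensional Bernoulli maximizations, each of which is solved by the standard calculus argument.

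First I would write out the $H_1$ log-likelihood. Under $H_1$ the entries $A_{v,v'}$ (for $v<v'$) are independent with $A_{v,v'}\sim\Bern(B^{(1)}_{\ell k})$ for $(v,v')\in\calB_{\ell,k}$, so, setting $s_{\ell k}:=\sum_{(v,v')\in\calB_{\ell,k}}A_{v,v'}$,
\[
\ell_1\big(\{B^{(1)}_{\ell k}\}\big)=\sum_{(\ell,k)}\Big( s_{\ell k}\log B^{(1)}_{\ell k} + (n_{\ell k}-s_{\ell k})\log(1-B^{(1)}_{\ell k})\Big).
\]
Since the $B^{(1)}_{\ell k}$ vary freely and independently over $[0,1]$, the joint maximum is attained by maximizing each summand separately. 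Each summand, viewed as $f(b)=s\log b+(m-s)\log(1-b)$, is strictly concave on $(0,1)$ with $f'(b)=s/b-(m-s)/(1-b)$ vanishing uniquely at $b=s/m$; together with the boundary cases $s\in\{0,m\}$, handled by continuity of $f$ on $[0,1]$, this yields $\Bhat^{(1)}_{\ell k}=s_{\ell k}/n_{\ell k}$, proving the first assertion.

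Then for $H_0$ I would repeat the argument, now with the parameters tied. Under $H_0$ all pairs in $\bigcup_{(\ell,k)\in\gamma_B}\calB_{\ell,k}$ share the single parameter $B^{(0)}_\gamma$, so grouping the log-likelihood by $\gamma\in\Gamma$ gives, with $S_\gamma:=\sum_{(\ell,k)\in\gamma_B}\sum_{(v,v')\in\calB_{\ell,k}}A_{v,v'}$ and $n_\gamma=\sum_{(\ell,k)\in\gamma_B}n_{\ell k}$ as in the statement,
\[
\ell_0\big(\{B^{(0)}_\gamma\}\big)=\sum_{\gamma\in\Gamma}\Big( S_\gamma\log B^{(0)}_\gamma + (n_\gamma-S_\gamma)\log(1-B^{(0)}_\gamma)\Big).
\]
The $B^{(0)}_\gamma$ again range freely and independently across $\gamma$, so the identical one-dimensional concavity argument gives $\Bhat^{(0)}_\gamma=S_\gamma/n_\gamma$, which is exactly the estimator in the lemma.

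Finally I would remark that the extension to a population of $N>1$ i.i.d.\ graphs is immediate: the log-likelihood is the sum over graphs of the expressions above, the decoupling by block pair (resp.\ by $\gamma$) and the strict concavity in each scalar coordinate are preserved, and the maximizers simply become the pooled sample proportions. I expect no genuine obstacle here; the only point requiring any care is the elementary boundary bookkeeping---verifying that the interior stationary point is the global maximizer over the closed interval $[0,1]$, including the degenerate cases in which all or none of the relevant (potential) edges are present---which is entirely standard.
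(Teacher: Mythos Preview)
Your proposal is correct and follows essentially the same route as the paper: write the log-likelihood, observe that it decouples over block pairs $(\ell,k)$ under $H_1$ and over merged parameters $\gamma\in\Gamma$ under $H_0$, and maximize each one-dimensional Bernoulli piece by the standard first-derivative/concavity argument. The paper simply cites \cite{bickel2009nonparametric} for the $H_1$ case rather than writing it out, and does not explicitly address the boundary cases or the $N>1$ extension, but otherwise the arguments coincide.
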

\noindent In this case, a consistent (and extremely powerful; see Section~\ref{sec:expts}) test for the above hypotheses can be constructed using the MLEs and Wilk's theorem.
This test, even at proper level $\alpha$, is too sensitive in practice; note that this (over) sensitivity of global network testing has been noted before, see, for example, the two-sample mesoscale testing work of \cite{macdonald2024mesoscale}.
Indeed, it (rightly!) asymptotically rejects the null for any small fixed deviation across any merged blocks, and is unable to identify cases where there is significant repeated structure within the model.

The global nature of the test also renders it inappropriate in the case where the SBM structures merged in the RMHSBM have small variations in their parameters that are centered around a common mean.
\textcolor{black}{
Indeed, often perfect HSBM structure is not present practically in observed networks, as there will be small deviations within pieces of the hierarchy.
For such a graph, with tree structure given by $\Gamma$, we will define the signal-to-noise ratio of the hierarchical structure via
\begin{equation*}
\mathfrak{s}_{\Gamma}:=\max_{\gamma\in\Gamma}\frac{\frac{1}{n_\gamma}\sum_{\{\ell, k\}\in \gamma_B}\sum_{(v,v')\in\calB_l\times \calB_k} \mathbb{E}(A_{v,v'})}{\sum_{\{\ell, k\}\in \gamma_B}\sum_{(v,v')\in\calB_l\times \calB_k}\mathbb{E}\left( A_{v,v'}-\frac{1}{n_\gamma}\sum_{\{\ell, k\}\in \gamma_B}\sum_{(w,w')\in\calB_l\times \calB_k}\mathbb{E}(A_{w,w'})\right)^2}
\end{equation*}
noting that, as we are assuming block structure, it is implicit that for each $\{\ell,k\}$, $\mathbb{E}(A_{v,v'})$ is constant all $(v,v')\in\calB_l\times \calB_k$. 
For each $\epsilon>0$, we can then test if the HSBM signal strength exceeds a threshold of $\epsilon>0$ by testing the following hypotheses.
}
\begin{equation}
\label{eqn:HSBMvar}
    \begin{aligned}
        H_0^{(\epsilon)} &:
        \mathfrak{s}_\Gamma^{-1}<\epsilon
        ;\\
        H_1 &: \text{ For all } (l,k)\in[k]^2\text{ and } (v,v')\in \calB_l\times \calB_k, E(v,v')\sim \Bern(B^{(1)}_{lk}).  
    \end{aligned}
\end{equation}
Practically, we can realize a model satisfying the signal-to-noise condition $\mathfrak{s}_\Gamma^{-1}<\epsilon$ as follows.  
We can have
that for all $\gamma\in \Gamma$, for all $\{\ell, k\}\in \gamma_B,$  and $(v,v')\in\calB_l \times \calB_k$, we have $\ A_{v,v'}\sim \Bern(B_{\gamma}^{(0)}+s_{\ell,k})$, where the $s_{\ell,k}$'s are suitably small.

Testing in the setting with variations between blocks is useful in our motivating neuroscience application, as it can test whether there is symmetry in the distribution of connectivity within the brain while encompassing individual differences and variations between left and right hemispheres due to environmental effects rather than parametric differences. 

The log-likelihood ratio  in the setting with variations between blocks from \eqref{eqn:HSBMvar} takes the form 
\begin{equation} \label{eq:llrvar} \begin{aligned}
-2\lambda_T 
&=2\sum_{\gamma\in\Gamma_T}\sum_{\{l,k\}\in \gamma_B} 
    \Bigg[ n_{lk}\log \frac{1- B_{lk}^{(1)}}{1-B_{\gamma}^{(0)}-s_{lk}}
+n_{lk} \Bhat_{lk}^{(1)}
    \log \frac{B_{lk}^{(1)}(1- B_{\gamma}^{(0)}-s_{lk})}
        {(B_{\gamma}^{(0)}+s_{lk})(1-B_{lk}^{(1)})} \Bigg] ,
\end{aligned} \end{equation}
The minimizer of this likelihood ratio function is not unique, as there is interdependence between the estimators of $\Bhat^{(0)}_\gamma$ and $\{\hat s_{lk}\}_{{l,k}\in \gamma_b}$ for each $\gamma\in\Gamma_T$. One may use numerical methods with an added condition to find a solution; e.g. $L_p$ regularization.
\subsection{Likelihood Ratio Testing}
\label{sec:LLR}

When the data come from the setting with variations between blocks, a na\"ive use of the likelihood ratio test based on the modeling assumptions in Equation~\eqref{eqn:HSBMnovar} can be shown to be problematic.
This is summarized in the following theorem.
A detailed proof can be found in Appendix \ref{sec:thm4pf}):
\begin{thm} \label{thm4}
Let $\bbP_{0,S}$ indicate the probability under the null hypothesis in Equation~\eqref{eqn:HSBMvar}, where these $\{s_{ij}\}$ further satisfy 
\begin{equation} \label{eq:S}
\epsilon_{S}
=\min_{\gamma\in\Gamma} \min_{(i,j)\in\gamma_B}
    \left|s_{ij}-\frac{1}{|\gamma_B|}\sum_{(\ell,h)\in\gamma_B}s_{\ell h}\right|>0.
\end{equation}
Let 
\begin{align*}
    \lambdahat_{\gamma}=-\sum_{(l,k)\in \gamma_B}n_{lk}\log\left(\frac{1-\Bhat_{lk}^{(1)}}{1-\Bhat_{\gamma}^{(0)}}\right)+n_{lk}\hat B_{lk}^{(1)}\log
\left(\frac{\Bhat_{lk}^{(1)}(1-\Bhat_{\gamma}^{(0)})}{\Bhat_{\gamma}^{(0)}(1-\Bhat_{lk}^{(1)})}\right)
\end{align*}
We then have that
\begin{equation*}
\bbP_{0,S}\left(
\frac{\epsilon_S^2n_\gamma}{9}\leq 
-2\lambdahat_{\gamma}
\leq \frac{8 n_\gamma}{\delta}
\right)
\geq 1-2e^{-2(\min(\epsilon_S/3,\delta/2))^2n_{\gamma}}+2\sum_{(l,k)\in \gamma_B}e^{-2(\min(\epsilon_S/3,\delta/2))^2n_{lk}}
\end{equation*}
\end{thm}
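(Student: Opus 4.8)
The plan is to reduce $-2\lambdahat_\gamma$ to a sum of Bernoulli Kullback--Leibler divergences and then sandwich it using concentration of the block-wise empirical means. First I would record the algebraic identity
\[
-2\lambdahat_\gamma \;=\; 2\sum_{(\ell,k)\in\gamma_B} n_{\ell k}\,\KL\!\left(\Bhat^{(1)}_{\ell k}\,\big\|\,\Bhat^{(0)}_\gamma\right),
\]
where $\KL(p\|q)=p\log\tfrac pq+(1-p)\log\tfrac{1-p}{1-q}$ is the Bernoulli divergence; this follows by collecting the $\log\tfrac{1-\cdot}{1-\cdot}$ and $\log\tfrac{\cdot}{\cdot}$ terms in the definition of $\lambdahat_\gamma$. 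Two elementary facts will drive everything: Pinsker's inequality gives $\KL(p\|q)\ge 2(p-q)^2$, and the standard bound $\KL(p\|q)\le\tfrac{(p-q)^2}{q(1-q)}$ gives the reverse control. I would also note that $\Bhat^{(0)}_\gamma=n_\gamma^{-1}\sum_{(\ell,k)\in\gamma_B}n_{\ell k}\Bhat^{(1)}_{\ell k}$ is the size-weighted pooled mean of the $\Bhat^{(1)}_{\ell k}$, so that under $\bbP_{0,S}$ we have $\E\Bhat^{(1)}_{\ell k}=B^{(0)}_\gamma+s_{\ell k}$ and $\E\Bhat^{(0)}_\gamma=B^{(0)}_\gamma+\bar s_\gamma$ with $\bar s_\gamma=n_\gamma^{-1}\sum n_{\ell k}s_{\ell k}$.

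Next I would set $t=\min(\epsilon_S/3,\delta/2)$ and define the good event $\mathcal{E}$ on which $|\Bhat^{(1)}_{\ell k}-\E\Bhat^{(1)}_{\ell k}|\le t$ for every $(\ell,k)\in\gamma_B$ and $|\Bhat^{(0)}_\gamma-\E\Bhat^{(0)}_\gamma|\le t$. Since $\Bhat^{(1)}_{\ell k}$ is an average of $n_{\ell k}$ independent $\{0,1\}$ entries and $\Bhat^{(0)}_\gamma$ an average of $n_\gamma$ independent $\{0,1\}$ entries (the $\calB_{\ell,k}$ for $(\ell,k)\in\gamma_B$ are disjoint collections of vertex pairs), Hoeffding's inequality and a union bound give $\bbP_{0,S}(\mathcal{E}^c)\le 2e^{-2t^2 n_\gamma}+2\sum_{(\ell,k)\in\gamma_B}e^{-2t^2 n_{\ell k}}$, so that $\bbP_{0,S}(\mathcal{E})$ is at least the quantity appearing in the theorem. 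All that remains is deterministic on $\mathcal{E}$.

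For the upper bound, on $\mathcal{E}$ the pooled estimate satisfies $\Bhat^{(0)}_\gamma\in[\delta/2,\,1-\delta/2]$ (using $t\le\delta/2$ together with the standing requirement that the true pooled probability $\E\Bhat^{(0)}_\gamma$ lies in $[\delta,1-\delta]$), hence $\Bhat^{(0)}_\gamma(1-\Bhat^{(0)}_\gamma)\ge\delta/4$; plugging this and the crude bound $(\Bhat^{(1)}_{\ell k}-\Bhat^{(0)}_\gamma)^2\le 1$ into the $\KL\le\tfrac{(p-q)^2}{q(1-q)}$ inequality yields $-2\lambdahat_\gamma\le 2\sum_{(\ell,k)\in\gamma_B}n_{\ell k}\cdot\tfrac{4}{\delta}=\tfrac{8n_\gamma}{\delta}$. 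For the lower bound, on $\mathcal{E}$ we have $|\Bhat^{(1)}_{\ell k}-\Bhat^{(0)}_\gamma|\ge|s_{\ell k}-\bar s_\gamma|-2t$, and $|s_{\ell k}-\bar s_\gamma|\ge\epsilon_S$ by the separation hypothesis \eqref{eq:S}; since $2t\le 2\epsilon_S/3$ this gives $|\Bhat^{(1)}_{\ell k}-\Bhat^{(0)}_\gamma|\ge\epsilon_S/3$ for every $(\ell,k)\in\gamma_B$, and summing the Pinsker bound gives $-2\lambdahat_\gamma\ge 4\sum n_{\ell k}(\epsilon_S/3)^2=\tfrac{4\epsilon_S^2 n_\gamma}{9}\ge\tfrac{\epsilon_S^2 n_\gamma}{9}$.

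The step I expect to be most delicate is this last one, namely propagating the full separation $\epsilon_S$ into every empirical deviation $|\Bhat^{(1)}_{\ell k}-\Bhat^{(0)}_\gamma|$: this compares $\Bhat^{(0)}_\gamma$ against the \emph{weighted} mean $B^{(0)}_\gamma+\bar s_\gamma$, whereas \eqref{eq:S} is phrased with the unweighted mean of the $s_{\ell h}$. When the block sizes are balanced within each merge class $\gamma_B$ (as holds for genuinely repeated motifs) the two coincide; otherwise one should either restate \eqref{eq:S} with the weighted mean, or use that the unweighted separation still forces $\max_{(\ell,k)\in\gamma_B}|s_{\ell k}-\bar s_\gamma|\ge\epsilon_S$ (the $s_{\ell k}$ cannot all lie on one side of their mean), which yields a lower bound of the form $c\,\epsilon_S^2\,n_{\ell^\ast k^\ast}$ on the dominant term. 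A secondary bookkeeping point is keeping the KL divergence oriented so that $\Bhat^{(0)}_\gamma$—the quantity we pin away from $0$ and $1$ through $\delta$—always sits in the second argument, which is precisely what makes both the Pinsker and the $\tfrac{(p-q)^2}{q(1-q)}$ bounds applicable.
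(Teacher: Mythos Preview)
Your approach is essentially the same as the paper's: Hoeffding concentration on $\Bhat^{(0)}_\gamma$ and each $\Bhat^{(1)}_{\ell k}$ at level $t=\min(\epsilon_S/3,\delta/2)$, followed by the two-sided Pinsker bounds for the Bernoulli KL to obtain the sandwich. Your flag about the weighted versus unweighted mean of the $s_{\ell k}$ is well taken---the paper's proof silently writes $\E_{0,S}[\Bhat^{(0)}_\gamma]=B^{(0)}_\gamma+|\gamma_B|^{-1}\sum s_{\ell k}$, which only matches the true (size-weighted) expectation when block sizes within $\gamma_B$ are equal, so you have actually been more careful on this point than the original.
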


\noindent Note that the assumption on $\epsilon_S$ holds almost surely, as the $s_{ij}$ are assumed continuous, and the number of blocks in the SBM model $K^*$ is assumed fixed. 

From this theorem, we see that the approximation of likelihood ratio testing based on Wilk's theorem is inappropriate.
Wilk's theorem posits that the testing criteria are based on a fixed value that depends only on the difference in the number of parameters or the degrees of freedom.
Theorem \ref{thm4} shows that the LLR statistic here grows as the number of nodes grow. 

Alternatively, if $N>1$, there is hope that averaging over the population will mitigate the effect of the $s_{\ell k}$ terms, and render classical LLR asymptotics again applicable.
As the $s_{\ell k}$ are mean zero, we expect that an average of $N$ such $s$'s to be of order $O(\sqrt{\log(N)/N})$ with high probability, and hence $\epsilon_S$ would be at most of this order as well.
Note however that $N/log(N)$ would need to be of order $n_\gamma$ for the probability lower bound in Theorem \ref{thm4} to not converge to 1, 
and thus $-2\hat{\lambda}_T$ does not grow with $n_\gamma$ with high probability. However, $n_\gamma=\Theta(n^2)$ for a network of size $n$. This makes this requirement prohibitive in practice, as it would require as many networks as there are nodes.

\subsection{Penalized Model Selection and Comparison}
\label{sec:BIC}

In this section, we will consider the problem of choosing between a collection of nested hierarchical SBMs using BIC model selection criteria \citep{schwarz1978estimating, claeskens2008model}.
BIC and its variants have proven to be effective tools for model selection in the SBM setting \citep[see, for example,][]{hu2020corrected,yan2016bayesian,wang2017likelihood}.
By nested models, we mean that the hierarchical structure of the one model is a refinement of the hierarchical structure of another. 
As an example, see Figure \ref{fig:hsbm:ex}, in which case the middle 2-level HSBM model is a refinement of the right 3-level HSBM model, and the left SBM model a refinement of the middle model; in essence, 
model $\mathcal{M}_1$ being a refinement of model $\mathcal{M}_2$ implies that 
\begin{itemize}
    \item[i.] The partition of vertices into blocks provided by $\mathcal{M}_1$ is a refinement for the partition provided by $\mathcal{M}_2$;
    \item[ii.] The set of motifs of $\mathcal{M}_2$ is a subset of the set of motifs in $\mathcal{M}_1$
    \item[iii.] The metablock mapping functions $\mathfrak{m}_1$ and $\mathfrak{m}_2$ are such that if $\mathfrak{m}_1$ maps a metablock to a motif that is present in both models, $\mathfrak{m}_2$ must also map the same metablock to that motif; this is the case with the center panel being a refinement of the right panel in Figure \ref{fig:hsbm:ex}.
\end{itemize} 
This nestedness assumption leads to a similar merging structure as considered in \cite{wang2017likelihood}, and our theoretical results below are of a similar nature.
We can use the BIC formulation below to compare any pair of these three network models, but for ease of exposition, we provide theory for comparing the SBM on the left to one of the HSBM models.

A key assumption below is that we are in the setting of one of the the hypotheses in Equation~\eqref{eqn:HSBMnovar}, so that under the RMHSBM model assumption (i.e., under $H_0$), we observe data from
\begin{equation*}
\RMHSBM(n,K,B,\{S_i\}_{i=1}^{K}, T, \calT,\Ltilde,\mathfrak{m},\{\calM_{i}\}_{i=1}^M)
\end{equation*}
that assumes that the tree structure, traversal, and motif assignments are known a priori. 
Under the SBM model assumption (i.e., under $H_1$),
we are assuming that the block membership function $\tau$ is known a priori and is compatible with $T, \calT,$ and $\mathfrak{m}$.
This avoids the combinatorial unpleasantness (and intractability) that arises when we consider all possible RMHSBM structures possible for a given network. 

We now proceed to show the consistency of the BIC penalized likelihood under these assumptions.
With notation as in Section \ref{sec:results}, the log-likelihood ratio test statistic (with likelihood maximizing plug-in estimators) for the unknown block probabilities can then be written as
\begin{equation} \label{eq:llr}
-2 \lambdahat_T=
2\sum_{\gamma\in\Gamma}
\sum_{\{\ell,k\}\in \gamma_B}n_{\ell,k}
\left[\Bhat_{\ell,k}^{(1)}\log
\left(\frac{\Bhat_{\ell,k}^{(1)}(1-\Bhat_{\gamma}^{(0)})}{\Bhat_{\gamma}^{(0)}(1-\Bhat_{\ell,k}^{(1)})}\right)
+\log\left(\frac{1-\Bhat_{\ell,k}^{(1)}}{1-\Bhat_{\gamma}^{(0)}}\right) \right]
\end{equation}
where $\Phat^{(1)}=\Phat^{(1)}_{\ell,k}$ denotes a $\Bern(\Bhat_{\ell,k}^{(1)})$ measure and 
$\Phat^{(0)}=\Phat^{(0)}_{\gamma}$ denotes a $\Bern(\Bhat_{\gamma}^{(0)})$ measure.
Note that $-2 \lambdahat_T=D_{\KL}(\Phat^{(1)}_{\ell,k}\|\Phat^{(0)}_{\gamma})$ and hence is nonnegative.
In this conditional setting (conditioning on the the tree structure, traversal, and motif assignments being given a priori), 
the standard BIC penalties applied block-wise to the estimated likelihood ratio statistic is defined by, letting $\Lhat_{0,T}$ be the maximum likelihood under $H_0$ and $\Lhat_{1,\tau}$ under $H_1$,
\begin{equation*}
\BIChat_{0,T}=-2\log(\Lhat_{0,T})+|\Gamma|\binom{n}{2}
\quad \text{ and } \quad
\BIChat_{1,\tau}=-2\log(\Lhat_{1,\tau})+ \sum_{\gamma\in\Gamma}|\gamma_B| \binom{n}{2}
\end{equation*}
and hence
\begin{equation}
\Deltahat_{T,\BIC} =\BIChat_{0,T}-\BIChat_{1,\tau}=-2\lambdahat_T-\left(\sum_{\gamma\in\Gamma}(|\gamma_B|-1)\right)\log \binom{n}{2} .
\label{eqn:LLRBIC}
\end{equation}

Below we state a pair of theorems that delineate the asymptotic behavior of $\Deltahat_{T,\BIC}$ under $H_0$ and $H_1$.  
In the theorems below, we make use of the following assumption on the growth rate of $B^{(1)}$: 
\begin{ass}
\label{ass:ass1}
We assume that there exists an $n_0\in\bbZ>0$ and constant $c_1\in(0,1/2)$ such that for all $n>n_0$, we have that 
the following holds for all $\{\ell,k\}$ pairs:
\begin{equation} \label{eq:ass1} 
B_{\ell,k}^{(1)}>\frac{\log n_{\ell,k}}{n_{{\ell,k}}},\quad \text{ and }\quad 
B_{\ell,k}^{(1)}<1/2-c_1.
\end{equation}
\end{ass}

Our first result states that under $H_0$, the penalized BIC difference is negative and the true RMHSBM model is preferred.
A proof can be found in Section~\ref{sec:penLR1} of the Appendix.
\begin{thm} 
\label{thm:penLR1}
With notation as above, given Assumption \ref{ass:ass1}, let 
\begin{equation*}
c_2\leq \frac{1}{16}\left(1-\frac{2|\Gamma|}{(K^*)^2+K^*}
\right)
\end{equation*}
be a constant.
Then we have that for all $n$ sufficiently large, 
\begin{equation*}
\bbP_{0}\left[ \Deltahat_{T,\BIC}<0 \right]
\geq 1 - 
O \left( \exp\left\{ -((K^*)^2+K^*) -\frac{c_2 \log n_{**}}{
2+ 2\sqrt{c_2}/3} \right\} \right)
\end{equation*}
where $n_{**}=\min_{\ell,k} n_{\ell,k}$.
\end{thm}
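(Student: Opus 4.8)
The plan is to prove that, under $H_0$, with high probability the likelihood ratio statistic satisfies $-2\lambdahat_T \le \big((K^\ast)^2+K^\ast\big)\,c_2\log\binom{n}{2}$, and then to observe that the choice of $c_2$ makes this strictly smaller than the BIC penalty $\big(\sum_{\gamma\in\Gamma}(|\gamma_B|-1)\big)\log\binom{n}{2}$ appearing in \eqref{eqn:LLRBIC}, so that $\Deltahat_{T,\BIC}<0$ on the corresponding event. Thus the whole argument reduces to a high-probability upper bound on $-2\lambdahat_T$, which I would establish by concentration of the blockwise edge-count estimators.

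First I would remove the randomness sitting in the second argument of the divergence. Since under $H_0$ the data are an RMHSBM, the true parameter lies in the null parameter space, which is itself contained in the alternative ($K^\ast$-block SBM with known $\tau$) parameter space. Writing $\Lhat_{1,\tau},\Lhat_{0,T}$ for the maximized likelihoods and $L^\ast$ for the likelihood evaluated at the true parameter, we have $\Lhat_{0,T}\ge L^\ast$ (the true RMHSBM is one of the candidates under $H_0$), and hence
\begin{equation*}
-2\lambdahat_T \;=\; 2\log\Lhat_{1,\tau}-2\log\Lhat_{0,T}\;\le\; 2\log\Lhat_{1,\tau}-2\log L^\ast \;=\; 2\!\!\sum_{\{\ell,k\}} n_{\ell,k}\, D_{\KL}\!\big(\Bern(\Bhat^{(1)}_{\ell,k})\,\big\|\,\Bern(B^{(0)}_{(\ell,k)})\big),
\end{equation*}
where $B^{(0)}_{(\ell,k)}$ denotes the true edge probability on block pair $\{\ell,k\}$ (equal to $B^{(0)}_\gamma$ whenever $\{\ell,k\}\in\gamma_B$), the sum runs over the $\binom{K^\ast}{2}+K^\ast=\tfrac12((K^\ast)^2+K^\ast)$ block pairs, and the last equality uses that $\Bhat^{(1)}_{\ell,k}$ is the blockwise MLE, the factorization of the likelihood across block pairs, and the identity that the twice-log-likelihood gap for a single $\Bern$ model with $n_{\ell,k}$ observations equals $n_{\ell,k}\,D_{\KL}$ of the two corresponding Bernoulli measures. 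This reduces the task to controlling, per block pair, a single-sample binomial deviance from its \emph{true} mean.

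Second I would control each summand. Using $D_{\KL}(\Bern(p)\|\Bern(q))\le (p-q)^2/\big(q(1-q)\big)$, it suffices that $|\Bhat^{(1)}_{\ell,k}-B^{(0)}_{(\ell,k)}|\le t_{\ell,k}$ with $t_{\ell,k}=\big(c_2\,B^{(0)}_{(\ell,k)}(1-B^{(0)}_{(\ell,k)})\log\binom{n}{2}/n_{\ell,k}\big)^{1/2}$, which forces $n_{\ell,k}D_{\KL}(\Bern(\Bhat^{(1)}_{\ell,k})\|\Bern(B^{(0)}_{(\ell,k)}))\le c_2\log\binom{n}{2}$. Bernstein's inequality for the $\Binom(n_{\ell,k},B^{(0)}_{(\ell,k)})$ count bounds the complementary probability by $2\exp\{-\tfrac12 n_{\ell,k}t_{\ell,k}^2/(B^{(0)}_{(\ell,k)}(1-B^{(0)}_{(\ell,k)})+t_{\ell,k}/3)\}$, and here Assumption~\ref{ass:ass1} is essential: the lower bound $B^{(0)}_{(\ell,k)}>\log n_{\ell,k}/n_{\ell,k}$ together with $B^{(0)}_{(\ell,k)}<1/2-c_1$ gives $n_{\ell,k}B^{(0)}_{(\ell,k)}(1-B^{(0)}_{(\ell,k)})\gtrsim\log n_{\ell,k}$, so that $t_{\ell,k}/3$ is only an $O(\sqrt{c_2})$ fraction of the variance term; bounding the constants, using $n_{\ell,k}\ge n_{**}$, and replacing $\log\binom{n}{2}$ by the smaller $\log n_{**}$ produces a per-pair failure probability of order $\exp\{-c_2\log n_{**}/(2+2\sqrt{c_2}/3)\}$. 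A union bound over the $\tfrac12((K^\ast)^2+K^\ast)$ block pairs then delivers the claimed upper bound on $-2\lambdahat_T$ off an event whose probability is $O\big(((K^\ast)^2+K^\ast)\exp\{-c_2\log n_{**}/(2+2\sqrt{c_2}/3)\}\big)$, matching the stated form once the polynomial-in-$K^\ast$ factor is folded into the $\exp\{-((K^\ast)^2+K^\ast)\}$ term.

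Finally I would close the comparison with the penalty. Because $\{\gamma_B\}_{\gamma\in\Gamma}$ partitions the block pairs, $\sum_\gamma|\gamma_B|=\tfrac12((K^\ast)^2+K^\ast)$, so $\sum_\gamma(|\gamma_B|-1)=\tfrac12((K^\ast)^2+K^\ast)-|\Gamma|$; the hypothesis $c_2\le\tfrac1{16}\big(1-\tfrac{2|\Gamma|}{(K^\ast)^2+K^\ast}\big)$ is exactly what makes $((K^\ast)^2+K^\ast)c_2$ strictly below $\sum_\gamma(|\gamma_B|-1)$, with comfortable room for the various non-tight steps (the gap between $D_{\KL}$ and its $\chi^2$ surrogate, the passage to $n_{**}$, etc.), so that on the good event $\Deltahat_{T,\BIC}=-2\lambdahat_T-(\sum_\gamma(|\gamma_B|-1))\log\binom{n}{2}<0$. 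I expect the main obstacle to be the sparse regime: when $B^{(0)}_{(\ell,k)}$ is as small as $\Theta(\log n/n)$ the variance of $\Bhat^{(1)}_{\ell,k}$ is tiny, and a Hoeffding-type deviation bound becomes hopelessly lossy once divided by $B^{(0)}_{(\ell,k)}(1-B^{(0)}_{(\ell,k)})$ in the $\chi^2$ bound; one must use the variance-adapted Bernstein inequality, and the lower-bound half of Assumption~\ref{ass:ass1} is precisely the hypothesis that keeps the Bernstein correction term from dominating the variance. Tracking constants so the exponent matches $c_2\log n_{**}/(2+2\sqrt{c_2}/3)$ and absorbing the union-bound factor into $\exp\{-((K^\ast)^2+K^\ast)\}$ is the remaining routine-but-fiddly bookkeeping.
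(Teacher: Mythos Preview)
Your proposal is correct and follows broadly the same architecture as the paper: concentrate the blockwise estimators via a Bernstein/Chung inequality, convert to a KL bound using the reverse-Pinsker inequality \eqref{eq:pinsker}, sum over block pairs, and compare to the BIC penalty using $\sum_\gamma|\gamma_B|=\tfrac12((K^\ast)^2+K^\ast)$. The essential difference is your first step: you replace the random second argument $\Bhat^{(0)}_\gamma$ by the true parameter $B^{(0)}_\gamma$ via the inequality $\Lhat_{0,T}\ge L^\ast$, so that each summand becomes $n_{\ell,k}\,D_{\KL}\big(\Bern(\Bhat^{(1)}_{\ell,k})\,\big\|\,\Bern(B^{(0)}_{(\ell,k)})\big)$ with a deterministic denominator $B^{(0)}_{(\ell,k)}(1-B^{(0)}_{(\ell,k)})$ in the $\chi^2$ surrogate. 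The paper instead works directly with the statistic as written in \eqref{eq:llr} and must therefore (i) separately concentrate $\Bhat^{(0)}_\gamma$ about $B^{(1)}_{\ell,k}$ (its Eq.~\eqref{eq:concB1}) and (ii) lower-bound the random quantity $\min(\Bhat^{(0)}_\gamma,1-\Bhat^{(0)}_\gamma)$ before applying \eqref{eq:pinsker}, which is where the extra factor $8/(1-\sqrt{c_2})$ in its bound originates. Your detour through the true parameter eliminates both of these steps and gives a tighter constant in the upper bound on $-2\lambdahat_T$; the paper's route stays closer to the displayed statistic but pays for it in bookkeeping. Either way the comparison with the penalty and the role of the hypothesis on $c_2$ are identical, and your reading of the $K^\ast$-dependence in the probability bound (polynomial factor absorbed into the $O(\cdot)$ once $K^\ast$ is fixed) matches what the paper's own derivation actually yields.
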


Let us next consider the case in which the RMHSBM is misspecified.
The next definition quantifies the level of model incompatibility.
Recall that under $H_1$, the model is an SBM$(K^*,B^{(1)},\tau)$ (where $\tau$ is assumed compatible with $\mathcal{T}$).
\begin{defn}
Under $H_1$, we say that the RMHSBM model with block structure given by $\Gamma$ is $(M,\eta)$-incompatible with the hypothesized SBM (from $H_1$)
model if $|\mathbb{E}\Bhat_\gamma^{(0)}-B^{(1)}_{\ell,k}|>\eta$ for at least $M$ triplets $\{\ell,k,\gamma\}$ such that $\{\ell,k\}\in\gamma_B$.
\end{defn}

\noindent Our next result states that under $H_1$ (when the SBM is the true model and not the HSBM), 
the penalized BIC difference is positive and the true SBM model is preferred; we then have
\begin{thm} \label{thm:penLR2}
With notation as above and given Assumption \ref{ass:ass1}, 
assume further that the RMHSBM is $(M,\eta)$-incompatible with the true SBM, where
\begin{equation*}
  \eta^2 M=\omega( (K^*)^2 n_{**}^{-1} \log n.
\end{equation*}
Then for any $c_3>0$, we have that for all $n$ sufficiently large, it holds that  
\begin{equation*}
\bbP_{1}(\Deltahat_{T,\BIC}>0)
\geq 1-4(K^*)^2 \exp\left\{ -\frac{c_3 \log n_{**}}
                        { 2+ 2\sqrt{c_3}/3} \right\},
\end{equation*}
where $n_{**}=\min_{\ell,k}n_{\ell,k}$.
\end{thm}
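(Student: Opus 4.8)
The plan is to show that under $H_1$ the log-likelihood ratio $-2\lambdahat_T$ is, with the stated probability, large enough to swamp the BIC penalty $P_n:=\bigl(\sum_{\gamma\in\Gamma}(|\gamma_B|-1)\bigr)\log\binom{n}{2}$ appearing in \eqref{eqn:LLRBIC}. Since $K^*$ is fixed and the $\gamma_B$'s partition the $\binom{K^*}{2}+K^*$ block pairs, we have $\sum_{\gamma}(|\gamma_B|-1)=O((K^*)^2)$ and hence $P_n=O((K^*)^2\log n)$; it therefore suffices to exhibit an event of probability at least $1-4(K^*)^2\exp\{-c_3\log n_{**}/(2+2\sqrt{c_3}/3)\}$ on which $-2\lambdahat_T=\omega((K^*)^2\log n)$. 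The growth will come entirely from the $M$ triplets $\{\ell,k,\gamma\}$ witnessing $(M,\eta)$-incompatibility.

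\emph{Step 1 (lower-bounding the LLR by the incompatible triplets).} Starting from the representation $-2\lambdahat_T=2\sum_{\gamma}\sum_{\{\ell,k\}\in\gamma_B}n_{\ell,k}D_{\KL}(\Bern(\Bhat^{(1)}_{\ell,k})\|\Bern(\Bhat^{(0)}_{\gamma}))$ displayed above, the nonnegativity of each KL term, and the binary Pinsker bound $D_{\KL}(\Bern(p)\|\Bern(q))\ge 2(p-q)^2$, I discard all but the incompatible triplets and bound $n_{\ell,k}\ge n_{**}$ to obtain
\[
-2\lambdahat_T\ \ge\ 4\,n_{**}\!\!\sum_{\{\ell,k,\gamma\}\ \mathrm{incomp.}}\!\!\bigl(\Bhat^{(1)}_{\ell,k}-\Bhat^{(0)}_{\gamma}\bigr)^2 .
\]
Assumption~\ref{ass:ass1} enters here only to guarantee that, with high probability, the merged-block estimators $\Bhat^{(0)}_{\gamma}$ avoid $\{0,1\}$ (their expected edge counts diverge, since $B^{(1)}_{\ell,k}>\log n_{\ell,k}/n_{\ell,k}$, and $\E\Bhat^{(0)}_\gamma<1/2-c_1$), so the KL terms are finite and the manipulation is legitimate.

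\emph{Step 2 (concentration).} Under $H_1$ the data is $\SBM(K^*,B^{(1)},\tau)$, so $\E\Bhat^{(1)}_{\ell,k}=B^{(1)}_{\ell,k}$ and $\E\Bhat^{(0)}_{\gamma}$ is the $n_{\ell,k}$-weighted average of $\{B^{(1)}_{\ell',k'}:(\ell',k')\in\gamma_B\}$; each estimator is an average of conditionally independent $[0,1]$-valued Bernoullis, so a Bernstein inequality with variance proxy $n_{\ell,k}$ (and the crude bound $\log n_{**}\le n_{**}$) gives
\[
\bbP\Bigl(\bigl|\Bhat^{(1)}_{\ell,k}-B^{(1)}_{\ell,k}\bigr|\ge \sqrt{\tfrac{c_3\log n_{**}}{n_{**}}}\Bigr)\le 2\exp\bigl\{-\tfrac{c_3\log n_{**}}{2+2\sqrt{c_3}/3}\bigr\},
\]
and identically for $\Bhat^{(0)}_{\gamma}$. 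A union bound over the at most $2(K^*)^2$ relevant estimators produces exactly the failure probability $4(K^*)^2\exp\{-c_3\log n_{**}/(2+2\sqrt{c_3}/3)\}$ in the statement. On the complementary event, the incompatibility bound $|B^{(1)}_{\ell,k}-\E\Bhat^{(0)}_{\gamma}|>\eta$ and the triangle inequality yield, for each incompatible triplet, $|\Bhat^{(1)}_{\ell,k}-\Bhat^{(0)}_{\gamma}|\ge \eta-2\sqrt{c_3\log n_{**}/n_{**}}$. Since $K^*$, and hence $M$, are fixed and $\log n_{**}\le\log\binom{n}{2}=O(\log n)$, the hypothesis $\eta^2M=\omega((K^*)^2 n_{**}^{-1}\log n)$ forces $\eta\sqrt{n_{**}/\log n_{**}}\to\infty$, so for $n$ large this lower bound is at least $\eta/2$.

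\emph{Step 3 (conclusion).} Combining Steps~1 and~2, on the good event $-2\lambdahat_T\ge 4n_{**}\cdot M\cdot(\eta/2)^2=Mn_{**}\eta^2$, and the scaling hypothesis gives $Mn_{**}\eta^2=\omega((K^*)^2\log n)$, which eventually exceeds $P_n=O((K^*)^2\log n)$; hence $\Deltahat_{T,\BIC}=-2\lambdahat_T-P_n>0$, as claimed. I expect the main obstacle to be Step~2: one must calibrate the deviation scale so that it is simultaneously negligible against $\eta$ — which is precisely what the $\eta^2 M$ assumption buys once one observes that $M,K^*=O(1)$ and $\log n_{**}=O(\log n)$ — and tuned so that the Bernstein tail reproduces the exact exponent $c_3\log n_{**}/(2+2\sqrt{c_3}/3)$ and the prefactor $4(K^*)^2$. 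The Pinsker reduction and the final penalty comparison are routine.
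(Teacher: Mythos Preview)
Your proposal is correct and follows essentially the same route as the paper: Pinsker's lower bound to turn $-2\lambdahat_T$ into a sum of squared differences, a Bernstein/Chung-type concentration for each $\Bhat^{(1)}_{\ell,k}$ and $\Bhat^{(0)}_\gamma$ (with a union bound over the $O((K^*)^2)$ estimators producing the $4(K^*)^2$ prefactor and the stated exponent), and then the $(M,\eta)$-incompatibility plus the growth hypothesis to dominate the $O((K^*)^2\log n)$ BIC penalty. The only cosmetic difference is that the paper takes the deviation scale to be $\sqrt{c_3\,B^{(1)}_{\ell,k}\log n_{\ell,k}/n_{\ell,k}}$ (using Assumption~\ref{ass:ass1} to simplify the Chung exponent), whereas you use the uniform scale $\sqrt{c_3\log n_{**}/n_{**}}$ with a crude variance bound; both choices land on the same tail and the remainder of the argument is identical.
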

\noindent Note that a proof of Theorem \ref{thm:penLR2} is given in Section~\ref{sec:penLR2} of the Appendix.

\section{Experiments} \label{sec:expts}

At first, we consider a global log likelihood ratio test.
As we have showed above, if one of the parameters we are testing for similarity fails, we would reject the hypothesis that there is similarity between the communities for sufficiently large $n$. 
This holda since the log likelihood ratio grows with a rate of $O(n_{**}^2)$ where the lead coefficient depends on the parameters. 
In light of this, we propose using conducting multiple hypotheses at the motif level, using the Benjamini-Hochberg correction \cite{benjamini1995controlling} to control False Discovery Rate.
In both cases, we will consider three methods for retrieving the null probabilities: classic likelihood ration testing (implemented via Wilk's theorem), Friedman signed rank for a nonparametric alternative, and ANOVA for testing population versus individual variability.
We next proceed to expound upon these methods.

These different methods consider different levels of granularity in the null hypotheses as well. 
In the case of Wilk's $\chi^2$, we can consider two levels of null granularity.  
More precisely, suppose we have $S$ graphs, $\{A_s\}_{s=1}^S$ with set parameter matrices $\{B_s=[B_{\ell,k;s}]\}_{s=1}^S$; we then consider  
\begin{itemize}
\item[i.] Individual level testing: the null hypotheses can be written (where $n_{\ell,k;s}$ denotes the number of possible edges between community $\ell$ and $k$ in graph $s$),
\begin{align*}
    H_0:
    \text{ For all $\gamma\in \Gamma$ and all $(\ell,k),(\ell',k')\in \gamma_B$, }B_{\ell,k;s}=B_{\ell',k';s}\text{ (global test)}\\
   \left\{H_{0}^{(\gamma)}:
    \text{ For all $(\ell,k),(\ell',k')\in \gamma_B$, }B_{\ell,k;s}=B_{\ell',k';s}\right\}_{\gamma\in\Gamma}\text{ (local tests)} 
\end{align*}
These hypotheses test within each graph at the individual level of the $S$ graphs;
\item[ii.] Aggregated testing:  letting $\tilde N_{\ell,k}=\sum_{s=1}^S  n_{\ell,k;s}$ and $\tilde B_{\ell,k}= \sum_{s=1}^S \frac{n_{l,k;s}}{\tilde N_{\ell,k}}B_{\ell,k;s}$, then the null hypotheses can be written,
\begin{align*}
    H_0:
    \text{ For all $\gamma\in \Gamma$ and all $(\ell,k),(\ell',k')\in \gamma_B$, }\tilde B_{\ell,k}=\tilde B_{\ell',k'}\text{ (global test)}\\
   \left\{H_{0}^{(\gamma)}:
    \text{ For all $(\ell,k),(\ell',k')\in \gamma_B$, }\tilde B_{\ell,k}=\tilde B_{\ell',k'}\right\}_{\gamma\in\Gamma}\text{ (local tests)} 
\end{align*}
Note that this means that the parameters can be different across individuals in the population as long as these weighted sums are not statistically significantly different.
\end{itemize}
 
Meanwhile, the ANOVA test puts an additional condition that the parameters across individuals are the same, replacing the set of parameter matrices above with a single matrix $B$. The nulls can then be written as (note for ANOVA and Friedman, we only test locally; the global tests exhibited the same trend as the $\chi^2$)
\begin{equation*}
\left\{ H_{0}^{(\gamma)}:~\begin{aligned}
    &B_{\ell,k; s}  =B_{\ell,k; s'} \text{ for all } s,s'\in [S], \\
    &B_{\ell,k;s} =B_{\ell',k';s}
    \text{ for all $(\ell,k),(\ell',k')\in \gamma_B$, } \\
&\Var B_s =\sigma^2 I \text{ for all } s\in [S].
\end{aligned} 
\right\}_{\gamma\in\Gamma}
\text{(local tests) }
\end{equation*}
The Friedman test meanwhile, as the data are Bernoulli, tests the same null hypotheses as the ANOVA case with the caveat that independence of the entries of $B_s$ is no longer assumed.

\subsection{Simulations}
\label{sim}
We do the simulations over two models, one to simulate the brain data from \cite{bnu11,bnu12} (obtained via \url{https://fcon_1000.projects.nitrc.org/indi/CoRR/html/bnu_1.html} through \url{https://neurodata.io/mri/}), the other a 3-motif RMHSBM.
Additional details about these models is in Table~\ref{tab:simDetails}. 
In each simulation the parameters are drawn independently from a Dirichlet distribution. 
In each we corrupt the model by changing some of the true parameters to be different from each other at random (injecting different levels of error into the alternative models); these changed parameters are drawn from the same Dirichlet distribution independently of the remainder of the parameters; an example model matrix for this in the BNU1-inspired simulation setting can be found in Figure \ref{ref:corrstencilBNU1} and for the 3-motif, 7-block model in Figure \ref{ref:corrstencil3m}. 
We repeat this twice, first, without any individual differences (as in Eq. \ref{eqn:HSBMnovar}), then by adding a deviations to all parameters to represent individual differences (as in Eq. \ref{eqn:HSBMvar}). 
The small deviations are normally distributed and centered about the true parameters, with a variance dependent on the magnitude of the true parameters, on average these differences have a magnitude of around $1\%$ of the true parameters. We draw the parameters once for all samples. In all cases the parameters had a lower bound cut-off of $0.01$ and an upper bound cut-off of $0.99$ to avoid boundary unpleasantries.

\begin{table}[]
    \centering
    \begin{tabular}{|c|c|c|}
        \hline
        Characteristic & BNU1 Simulation & 3 motif RMHSBM \\
        \hline
        Average nodes per block & 200 & 200\\
        \hline
        Number of blocks & 70 & 70\\
        \hline
        motifs & 1 (repeated twice) & 3 (1 rep. 3 times, 2 \& 3 rep. twice)\\
        \hline
        parameters & 631 & 195\\
        \hline
    \end{tabular}
    \caption{Model specifications for the two RMHSBM models we test.}
    \label{tab:simDetails}
\end{table}

\begin{figure}[h]
    \centering
    \includegraphics[height=0.4\textwidth]{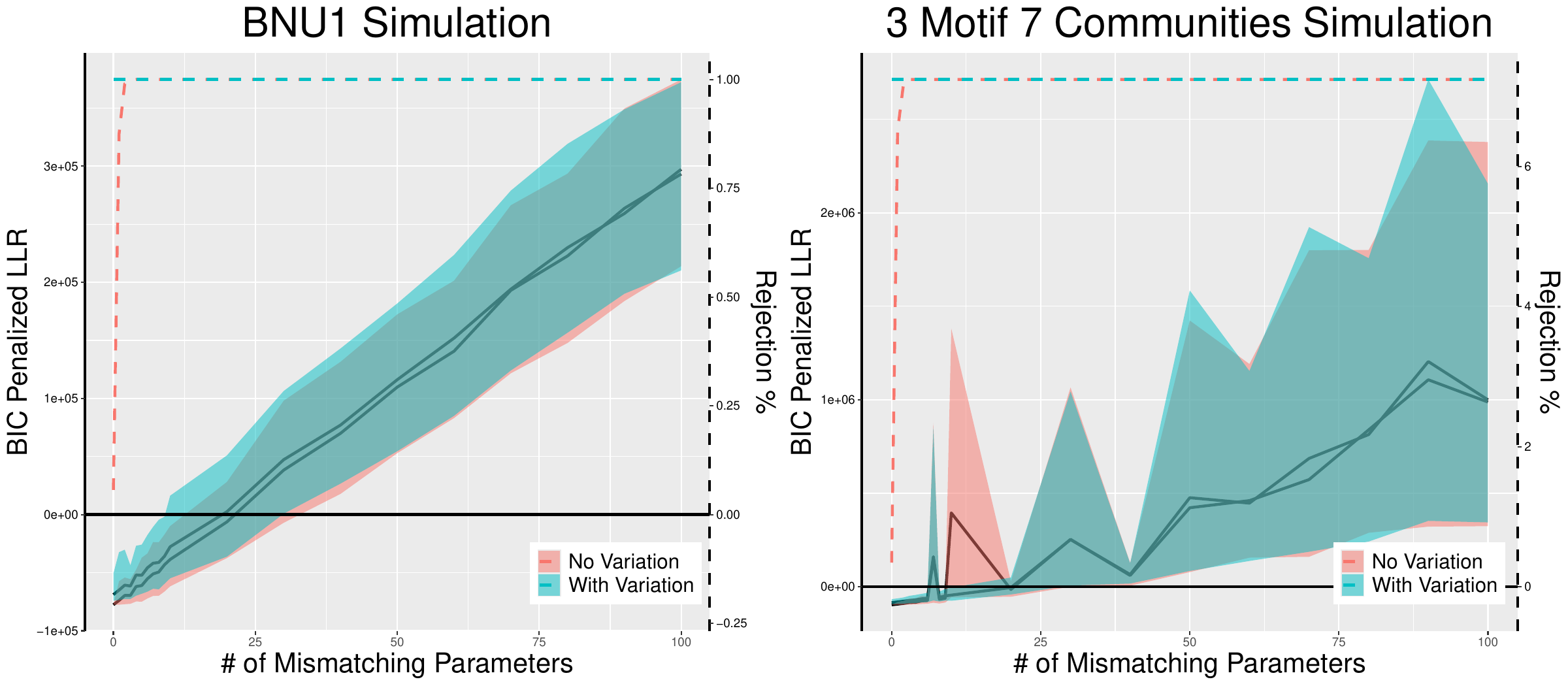} 
    \caption{In each panel: the left axis (and the plots with error bands) shows the penalized log likelihood ratio based on equation \ref{eqn:LLRBIC} while the right axis (and the dashed line plots) show the rejection rate of the global log likelihood ratio test.
    The right panel shows the BNU1-based simulation, the left panel the 3 motif-7 communities simulation.}
    \label{ref:simsLLR}
\end{figure} 

In Figure \ref{ref:simsLLR}, the right axes along with the colored line plots show the average rejection rate over varying numbers of changed parameters for the global $\chi^2$ test of the BNU1 and the 3-motif, 7-communities error-less simulations, respectively; the left axes along with the ribbon plots show the penalized likelihood ratio from equation \ref{eqn:LLRBIC}. 
The results are averaged over 200 simulations (20 for each of the 200 different parameterizations). For each, we plot both the no variation (red) and small variation (blue) settings.
In both panels, we see that in the no variation case this test (correctly) rejects the null hypothesis of repeated structure if we change 10 of the parameters.
In the setting with small variations, the LLR-based test is inconsistent and rejects the null (based upon Wilk's Theorem) even in the case where no parameters are mismatched, as the application of the $\chi^2$ critical value here---as is used in Wilk's theorem---is not appropriate; see Theorem \ref{thm4}.
Meanwhile, we see that in both panels,
the penalized LLR results indicate that the penalized model still selects for repeated structure even when we have around $5\%$ mismatch in the parameters; and that the likelihood ratio test fails to detect similarities in the repeated motif structure even in the case where these few parameters are mismatched. 
This shows the weakness of global testing based on the likelihood ratio, aligning with the theory in which we have shown that these small errors are scaled by a factor of $n^2$ in the test, rendering this test very sensitive to small changes.
\begin{figure}[t!]
    \centering
    \includegraphics[height=0.5\textwidth]{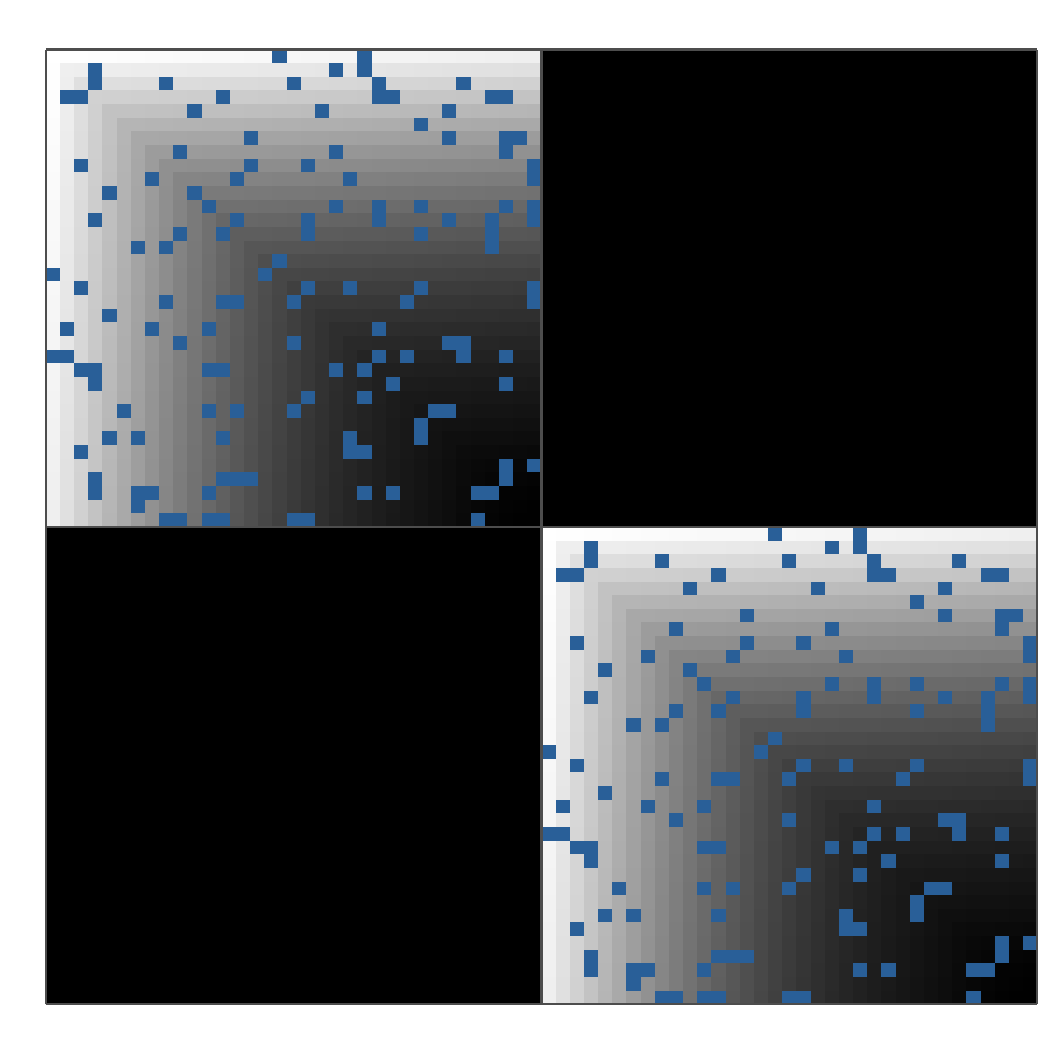}
    \caption{Example model matrices for the BNU1 simulation, 70 blocks, where the first 35 are similar to the second 35 with one cross community connection parameter between them. Blue squares indicate the parameters we set to be different from their supposed pair, while the remaining shades of grey indicate similarity.}
    \label{ref:corrstencilBNU1}
\end{figure}
\begin{figure}[t!]
    \centering
    \includegraphics[height=0.5\textwidth]{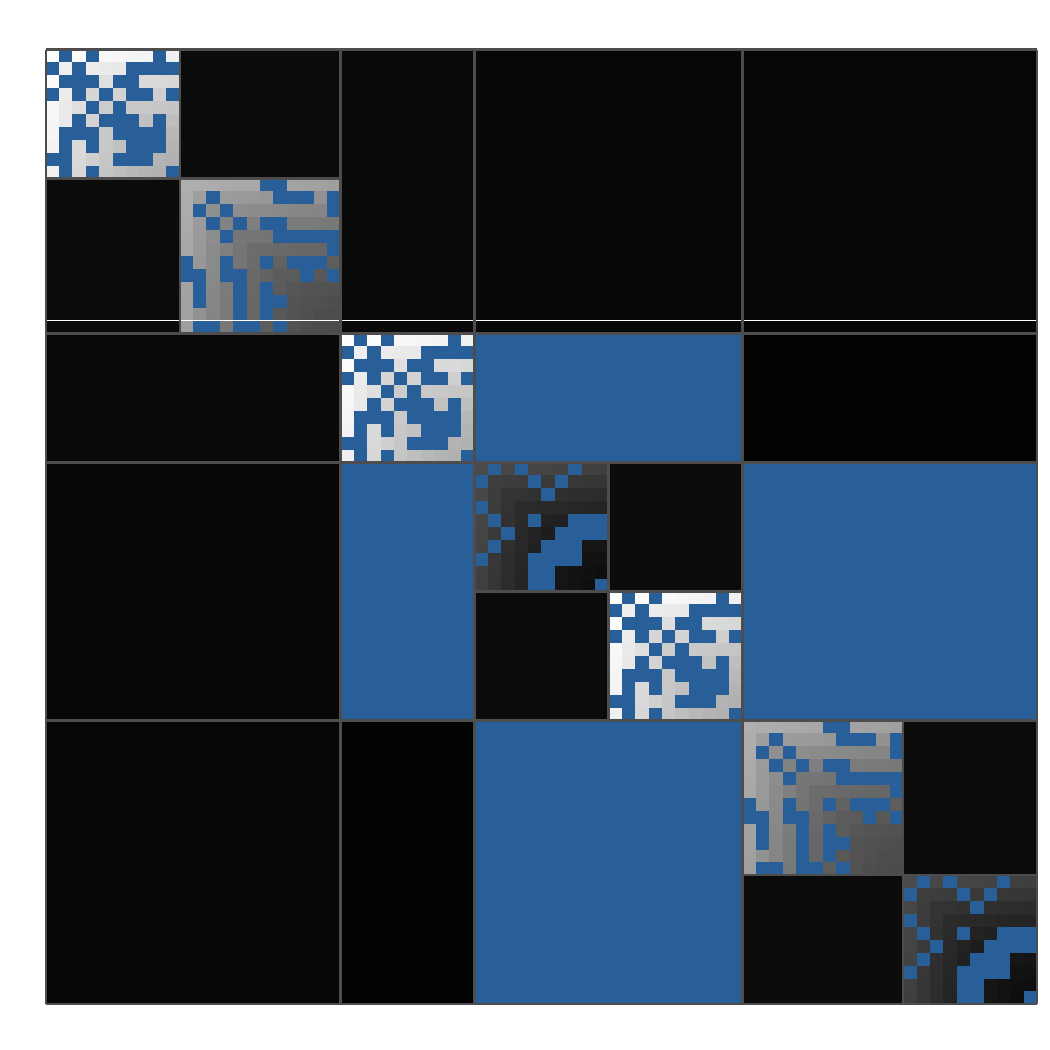}
    \caption{Example model matrix for the 3-motif, 7-communities simulation, 70 blocks, where the block groups $(1,3,6), (2,6), (4,7)$ share a motif. Blue squares indicate the parameters we set to be different from their supposed pair, while the remaining shades of grey indicate similarity.}
    \label{ref:corrstencil3m}
\end{figure}

\begin{figure}[t!]
    \centering
    \includegraphics[height=0.5\textwidth]{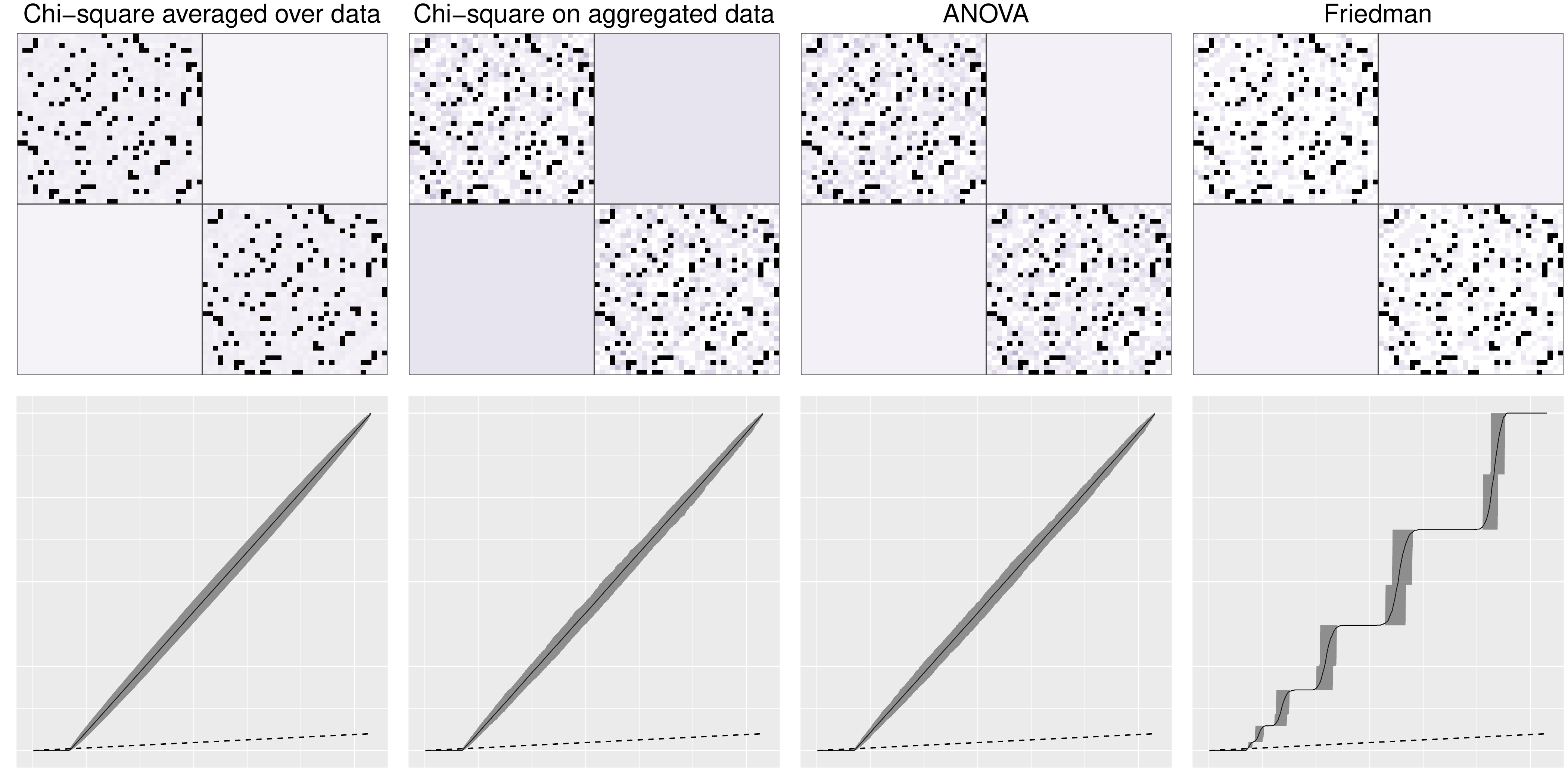}
    \caption{On the top row, are the rejection matrices for (from left to right)  the averaged individual $\chi^2$ tests, the aggregated $\chi^2$ test, ANOVA test, and Friedman's signed rank test on the error-less BNU1 simulation.In black are the correctly rejected blocks, in blue are the incorrectly rejected blocks (Type II error), in white are the blocks that correctly failed to reject, and in red are the blocks are incorrectly failed to reject (Type I error). On the bottom row are the P-value profile for each method respectively, the solid line represents the P values of each block in increasing order, the dashed line represents the Benjamini-Hochberg rejection line. Blue colors are exaggerated to make the comparison clear, all models similarly well.}
    \label{ref:BNU1mat}
\end{figure}

The next approach is to test individual motifs separately, with a Benjamini-Hochberg correction, using the methods discussed above. 
In Figure \ref{ref:BNU1mat}, we see the results of the averaged $\chi^2$ individual testing, the aggregated $\chi^2$ test, ANOVA, and Friedman's signed rank test from left to right, respectively.
Note that this figure considers the model that does not account for variations in the parameters across individuals. 
In the first row are the rejection matrices showing correctly rejected blocks in black, incorrectly rejected blocks in blue, and incorrectly accepted blocks in red;
note that the aggregated methods are more resilient to incorrectly accepting blocks, as indicated by the absence of red in this example. Meanwhile, individual testing is more susceptible to these errors. 
Moreover, the aggregated tests are susceptible to incorrect rejection of blocks, as indicated by the blue squares.
The averaged individual testing, on the other hand, is not free of these errors, but with the correct threshold one may circumvent these errors; the method for selecting this threshold is not clear however. Keep in mind that the darkness of the blue squares is not linearly indicative of the rejection rate; instead it remapped for better visibility. On the second row, we see the profile of the p-values in increasing order; we refer to this as the p-profile going forward. The p-profiles are all linear, indicating a good fit for these tests. In Figure \ref{ref:3mmat}, we see similar results when testing for the 3-motif, 7-communities model.
\begin{figure}[t!]
    \centering
    \includegraphics[height=0.5\textwidth]{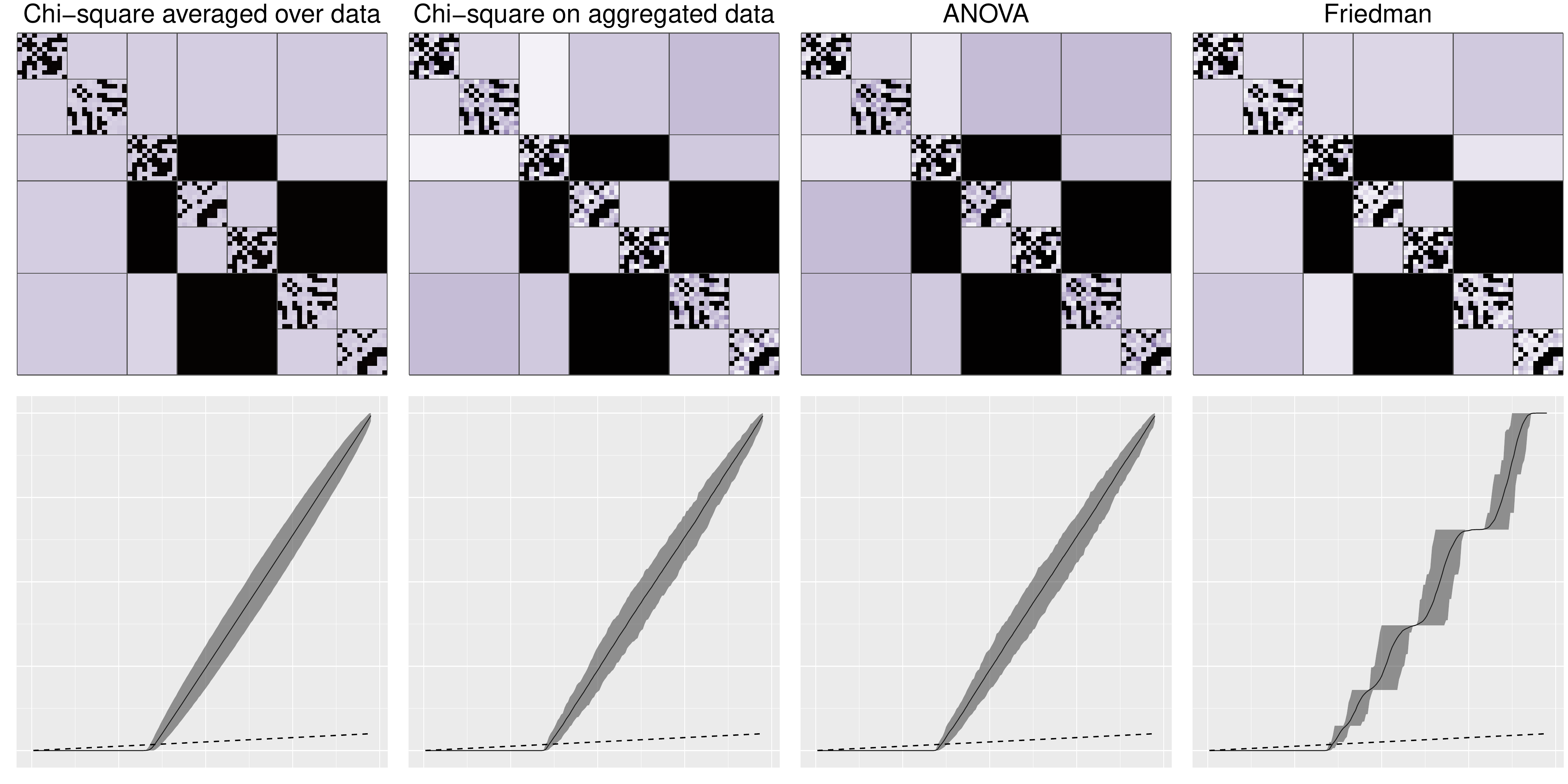}
    \caption{On the top row, are the rejection matrices for (from left to right)  the averaged individual $\chi^2$ tests, the aggregated $\chi^2$ test, ANOVA test, and Friedman's signed rank test on the error-less 3-motif, 7-communities simulation. In black are the correctly rejected blocks, in blue are the incorrectly rejected blocks (Type II error), in white are the blocks that correctly failed to reject, and in red are the blocks are incorrectly failed to reject (Type I error). On the bottom row are the P-value profile for each method respectively, the solid line represents the P values of each block in increasing order, the dashed line represents the Benjamini-Hochberg rejection line. Blue colors are exaggerated to make the comparison clear, all models similarly well.}
    \label{ref:3mmat}
\end{figure}

Moving on to modeling small variations in the sample, we see a different picture. In Figures \ref{ref:BNU1matcorr} and \ref{ref:3mmatcorr}, we see the results of adding small variations to the BNU1 and 3-motif, 7-communities models, respectively. In the presence of small variations, the flaws in both $\chi^2$ tests become apparent. False rejections become very common as indicated by the large amount of blue squares, compared to the ANOVA and Friedman's signed rank tests, respectively. The p-profiles of the $\chi^2$ are not linear indicating that the modeling assumptions are not a good fit in this case. ANOVA and Friedman signed rank tests have a linear p-profile indicating the resilience of their modeling assumptions to small variations. These small variations simulate individual differences across the population.

\begin{figure}[t!]
    \centering
    \includegraphics[height=0.5\textwidth]{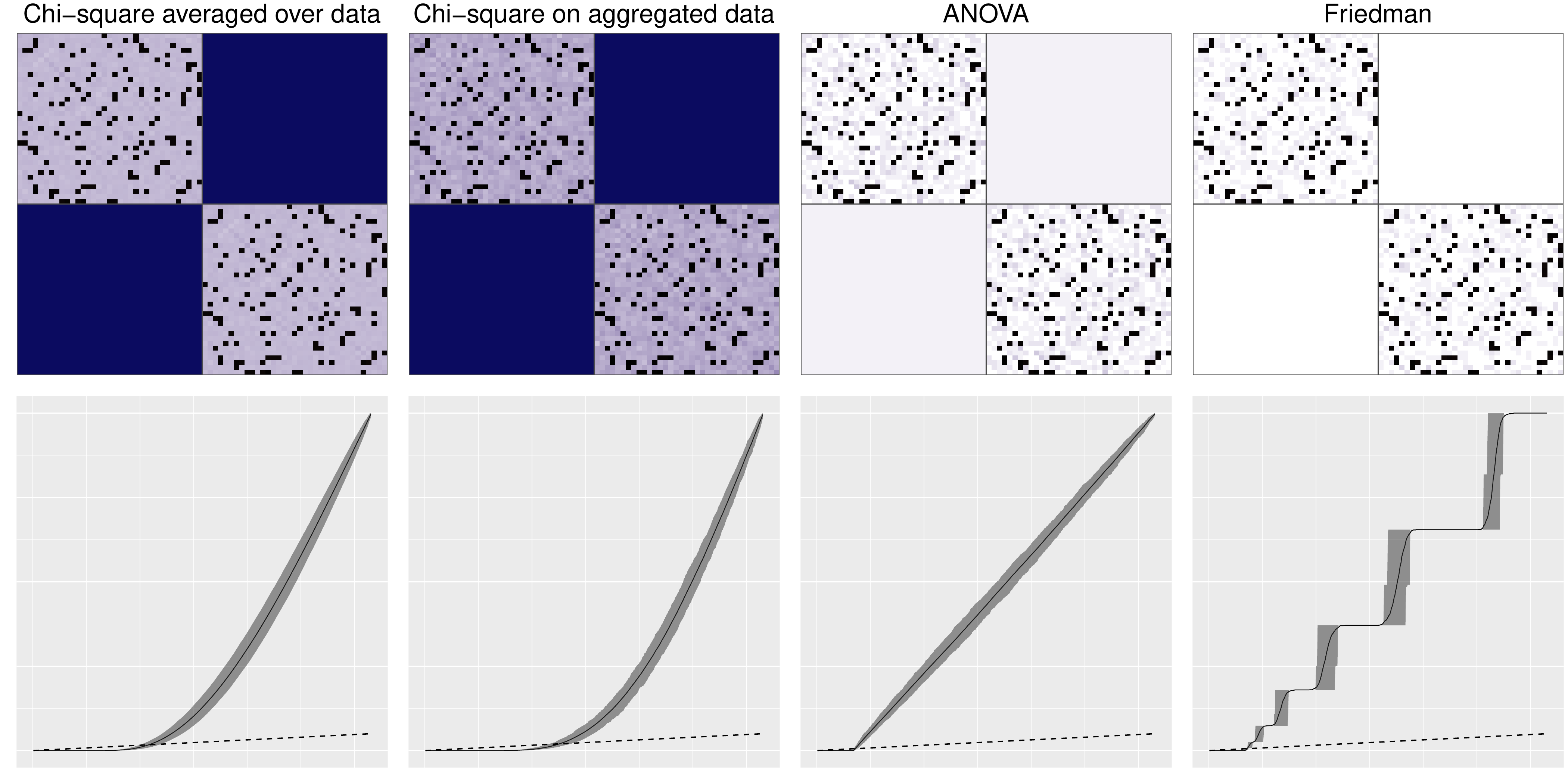}
    \caption{On the top row, are the rejection matrices for (from left to right)  the averaged individual $\chi^2$ tests, the aggregated $\chi^2$ test, ANOVA test, and Friedman's signed rank test on the error-full BNU1 simulation.In black are the correctly rejected blocks, in blue are the incorrectly rejected blocks (Type II error), in white are the blocks that correctly failed to reject, and in red are the blocks are incorrectly failed to reject (Type I error). On the bottom row are the P-value profile for each method respectively, the solid line represents the P values of each block in increasing order, the dashed line represents the Benjamini-Hochberg rejection line.}
    \label{ref:BNU1matcorr}
\end{figure}

\begin{figure}[h]
    \centering
    \includegraphics[height=0.5\textwidth]{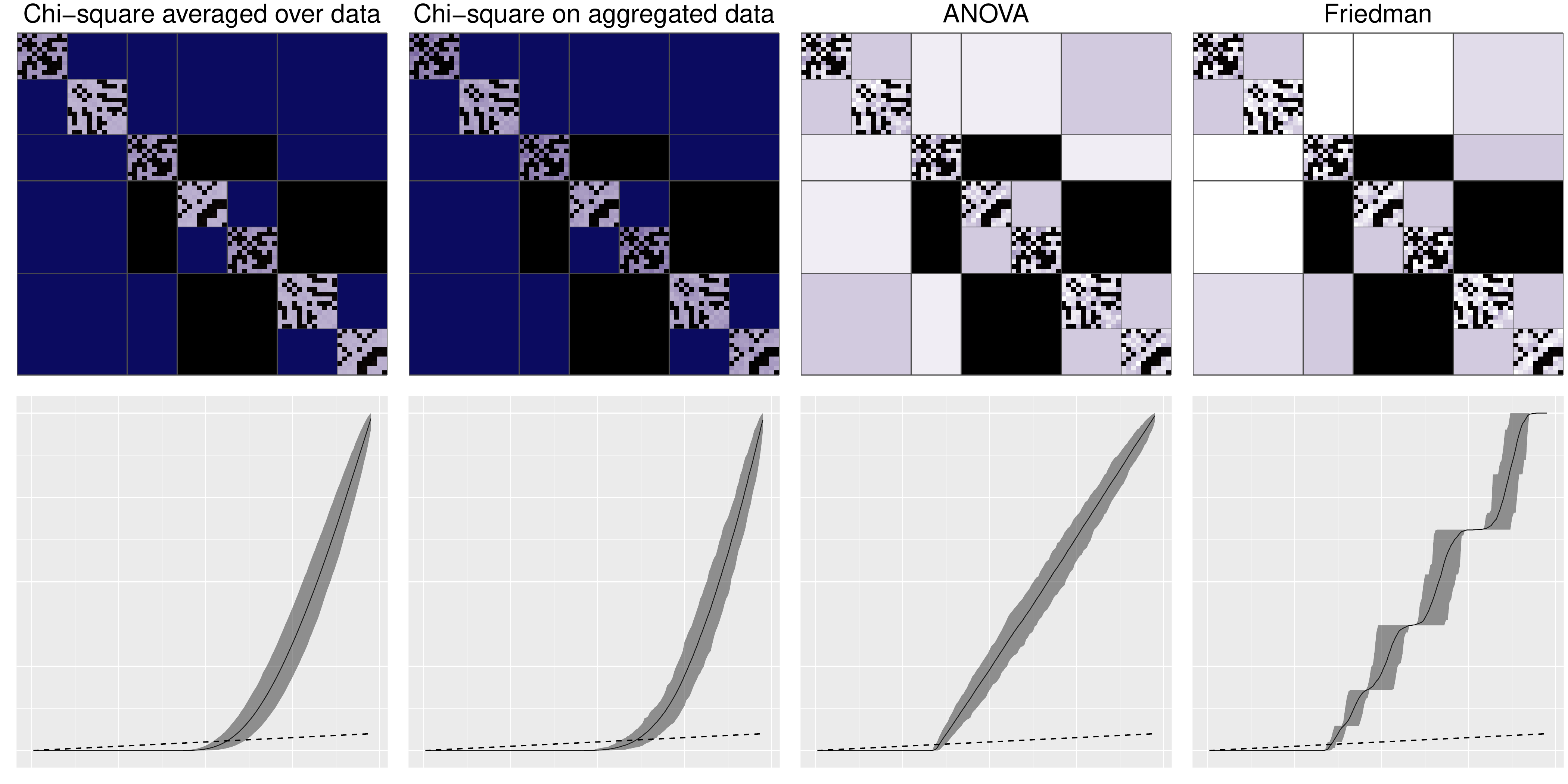}
    \caption{On the top row, are the rejection matrices for (from left to right)  the averaged individual $\chi^2$ tests, the aggregated $\chi^2$ test, ANOVA test, and Friedman's signed rank test on the error-full 3-motif, 7-communities simulation. In black are the correctly rejected blocks, in blue are the incorrectly rejected blocks (Type II error), in white are the blocks that correctly failed to reject, and in red are the blocks are incorrectly failed to reject (Type I error). On the bottom row are the P-value profile for each method respectively, the solid line represents the P values of each block in increasing order, the dashed line represents the Benjamini-Hochberg rejection line.}
    \label{ref:3mmatcorr}
\end{figure}

\subsection{Application: Testing for Community Structure in The Brain.}

Identifying the regions of the brain that share a distributional similarity can help identify regions that have similar functionality or those that are independent. This analysis is performed on the BNU1 data set from \citep{bnu11,bnu12}, a data set of 57 fMRI scans of the human brain of adults at rest. After dividing the brain into spatial regions commonly referred to as voxels, fMRI scans measure changes in blood flows. These changes are interpreted as connections between the voxels that serve as the nodes, producing a graph. The data classifies these nodes into regions that represent larger spatial portions of the brain, forming a natural block structure. Additionally, the voxels are classified based on which half of the brain they are located. 
Combining the classifications, regions and halves gives rise to a natural hierarchical structure for the brain. Table \ref{tab:graphDetails} describes the details of these graphs.

\begin{table}[t!]
    \centering
    \begin{tabular}{|c|c|}
        \hline
        Characteristic & Range \\
        \hline
        Number of vertices & 9312-15631\\
        \hline
        Number of connections & 204022-539259\\
        \hline
        Average degree & 43-74\\
        \hline
        Number of blocks & 70\\
        \hline
        Average block size & 149-211\\
        \hline
    \end{tabular}
    \caption{The range of number of vertices, connections, average degree, and block sizes of the 57 individuals from the BNU1 dataset.}
    \label{tab:graphDetails}
\end{table}

Using the posited hierarchical structure, we perform different analysis techniques to describe the possibility of repeated structure across hemispheres and to understand the potential reduction in the number of parameters needed to describe the distribution of these networks. More precisely, we attempt to answer the questions of whether there is symmetry in the edge distribution across the two halves of the brain.
In Figure \ref{ref:BNU1dataresults} we plot the result of our 4 procedures, in black are the rejected blocks, in white are the blocks that failed to reject, and in red are the blocks that are identically zero in both hemispheres and which trivially fail to reject.
In the figure, in black are the rejected blocks, in white are the blocks that failed to reject, and in red are the blocks that are identically zero in both hemispheres which always fail to reject.
On the left, the results for Wilk's $\chi^2$-test with Benjamini-Hochberg simultaneous testing are displayed. We average over all the individuals since each individual is tested on their own. We note that many of the similarities we fail to reject are instances where both the left and right block probabilities are identically zero on both hemispheres, as indicated by the red colored blocks. From the simulations, we learned that it is difficult to draw any conclusions about the partially rejected tests. 

Next, we aggregated the samples and then performed the chi-square test. The results of these tests are displayed in Figure~\ref{ref:BNU1dataresults} in the second column from the left. Once again, a large part of the tests we fail to reject correspond to parameters that are identically $0$. In this case, we reject most of the ambiguous blocks from the averaged individual tests. This is in stark contrast with the ANOVA test, as we see in Figure~\ref{ref:BNU1dataresults} in the second to last column. Here, the model suggests that many of the parameters have a common mean with differences that are normally distributed. 

\begin{figure}[h]
    \centering
    \includegraphics[height=0.5\textwidth]{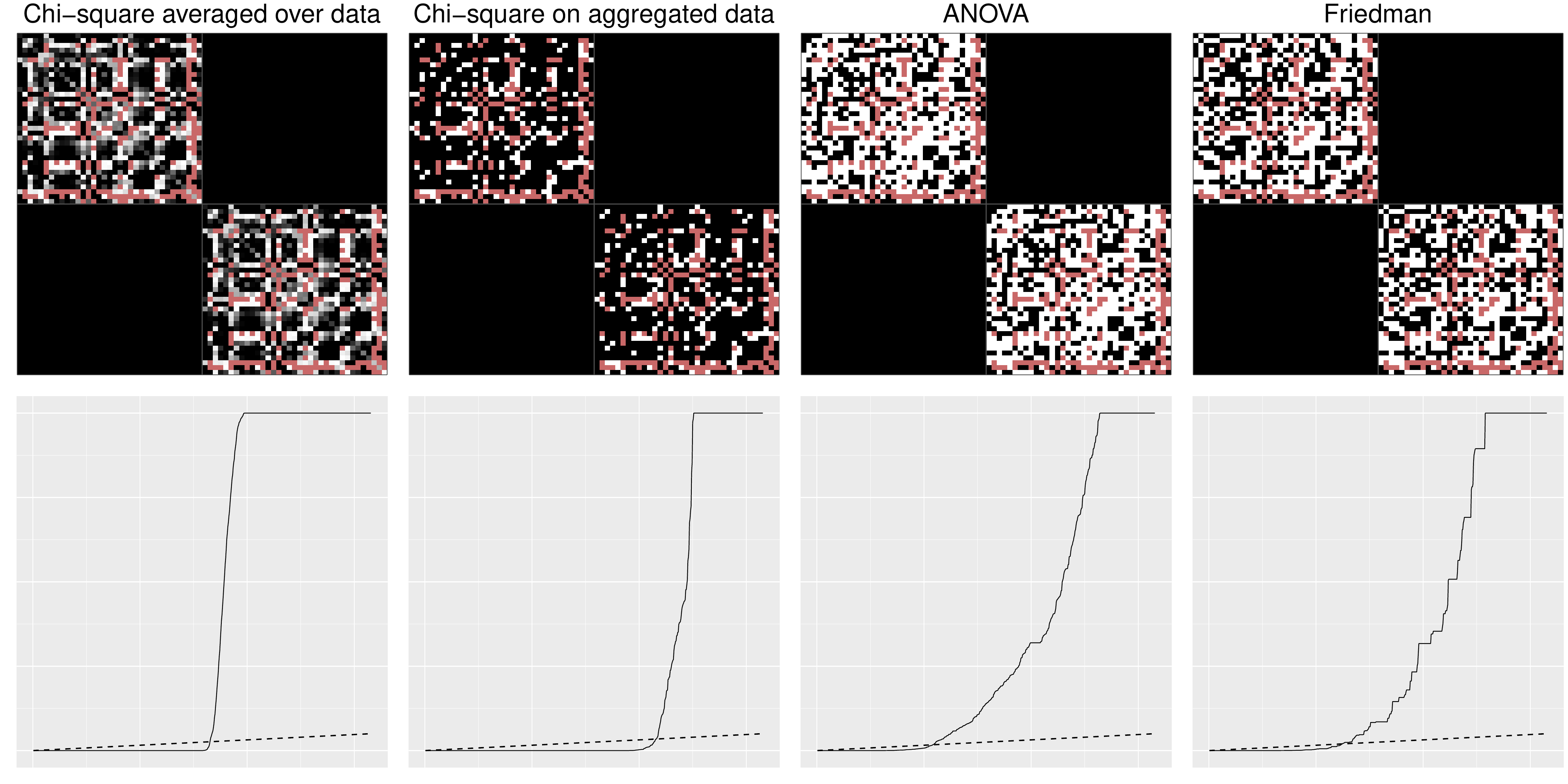}
    \caption{On the top row, are the rejection matrices for (from left to right)  the averaged individual $\chi^2$ tests, the aggregated $\chi^2$ test, the ANOVA test, and Friedman's signed rank test on the BNU1 dataset. In black are the rejected blocks, in white are the blocks that failed to reject, and in red are the blocks that are identically zero in both hemispheres which always fail to reject. On the bottom row are the P-value profile for each method respectively, the solid line represents the P values of each block in increasing order, the dashed line represents the Benjamini-Hochberg rejection line.}
    \label{ref:BNU1dataresults}
\end{figure}

 Finally, we pair the parameters on the left and right hemispheres for each individual, then perform a Friedman signed-rank test.
 This is tested against a mean of zero, and the rejection of the null hypothesis would suggest that the pairs are not exchangeable. 
 This implies (for instance) that the parameters on the left and right hemispheres do not have the same distribution. 
 The results of this test are shown in figure \ref{ref:BNU1dataresults} in the last column. These results largely agree with ANOVA, indicating statistically significant similarity between the left and right hemispheres with small variability across the population, with results similar to those in the simulation setting of Figure \ref{ref:BNU1matcorr}.

\section{Discussion and Conclusion} \label{sec:conclusion}
Structural similarity in graphs can be observed in many disciplines; RMHSBM models provide a general framework for describing graphs with similarity defined in a hierarchy. 
Such models can drastically reduce the number of parameters required to describe these networks which is especially important for large networks with many small communities.
Such simplification can help us better understand and estimate models for these networks.
In this paper, we used theory and simulations to show the difficulty of global testing in situations where there is partial similarity in the hierarchy.
In this case, the sensitivity of likelihood ratio tests to small differences overwhelms the signal in a potentially repeated structure (as shown in the penalized LLR setting).
Instead, we propose locally more robust methods that work to identify a repeated structure at the motif level. 

In addition, we applied these methods to the BNU1 data set and found some structural similarity between the left and right hemispheres in human brains.
For naturally developing networks, confounding environmental effects can cause small perturbations that challenge the underlying symmetry.
By employing a hierarchical structure, we can study these networks at multiple levels of resolution to uncover the underlying symmetry.
However, even with hierarchical thinking, global testing suffers on two fronts: perturbations may lead to false rejections and failure to discover symmetries when some but not all sub-communities admit it.

We give theoretical support that likelihood ratio testing in the presence of small perturbations (that grow at a rate $\omega\left(\frac{1}{\sqrt{n}}\right)$) rejects any similarity under classical LLR testing assumptions.
From simulations, we see that both ANOVA and Friedman's signed rank tests are resilient to these perturbations, discovering the underlying symmetry in a setting where likelihood-ratio testing fails.
In light of these discoveries, we apply these methods to test for symmetries between the hemispheres of human brains. We discover that when we account for these perturbations by using ANOVA or Friedman's test, there are some symmetries at the mesoscale.
Also, during our work with brain imaging data, we discovered that functional regions of the brain do not admit the associative structure defined in \ref{rem:TSG} on which most clustering algorithms rely.

We used the natural clustering given by the data set, but a method that can reliably retrieve these structures and find further subdivisions of them to retrieve a hierarchical structure would allow for this testing to be performed on multiple levels to explore the degree to which there is symmetry. 
An extension of this framework to mixed memberships can be considered here; if the attachment probability in a community is lower than another and a node has mixed membership, it will naturally lead to the break from associativity conditions.
In addition to symmetry, we discussed how repeated structure is related to subgraph matching.
In this work, the matches were already known, and we tested whether they represent a repeated motif.
Further study of the extent to which a repeated motif can lead to the discovery of matches would be another direction to extend this work.

\vspace{2mm}

\noindent\textbf{Acknowledgments:}  
Support for this research was provided by the University of Wisconsin-Madison, Office of the Vice Chancellor for Research and Graduate Education with funding from the Wisconsin Alumni Research Foundation.
The authors gratefully acknowledge support from the JHU HLTCOE.

\bibliographystyle{plainnat}
\bibliography{biblio}

\newpage
\onecolumn
\appendix
\section*{Appendix} \label{sec:apx}

Below we collect technical lemmas and proofs related to our main results.

\section{Supporting results}
\label{sec:supp}

We will make use here of the following theorem, adapted here for our present purposes.
(Theorem 3.3 in \cite{chung2006concentration})

\begin{theorem} \label{thm:chung}
Let $X\sim \Binom(n,p)$, and let $t>0$.
We then have
\begin{equation*} \begin{aligned}
\bbP\left(\frac{X}{n}-p\leq -t\right)
&\leq \exp\left\{-\frac{t^2n}{2p} \right\}\\
\bbP\left(\frac{X}{n}-p\geq t\right)
&\leq \exp\left\{-\frac{t^2n}{2p+2t/3} \right\} .
\end{aligned} \end{equation*}
\end{theorem}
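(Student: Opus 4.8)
The plan is to prove both tail bounds by the Chernoff exponential-moment method, using that $X=\sum_{i=1}^n X_i$ with $X_i\iid\Bern(p)$, so that $\bbP$-expectations factor: $\E e^{\lambda X}=(1-p+pe^{\lambda})^n$ for every real $\lambda$. The whole argument is two applications of Markov's inequality to an exponential, followed by an optimization over the tilting parameter $\lambda$, with a different (slightly sharper) elementary bound on the exponential in each of the two tails.

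For the lower tail I would fix $\lambda>0$ and write
\[
\bbP\!\left(\frac{X}{n}-p\le -t\right)=\bbP\!\left(e^{-\lambda X}\ge e^{-\lambda(p-t)n}\right)\le e^{\lambda(p-t)n}\,(1-p+pe^{-\lambda})^n .
\]
Taking logarithms and using $\log(1+u)\le u$ with $u=p(e^{-\lambda}-1)$ bounds the exponent by $n\big[\lambda(p-t)+p(e^{-\lambda}-1)\big]$, and then the elementary inequality $e^{-\lambda}-1\le -\lambda+\lambda^2/2$ (valid for all $\lambda\ge 0$) collapses this to $n\big[-\lambda t+p\lambda^2/2\big]$. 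Minimizing over $\lambda$ by taking $\lambda=t/p$ leaves exponent $-nt^2/(2p)$, which is the first claimed bound.

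For the upper tail I would run the symmetric argument with $e^{\lambda X}$, $\lambda>0$, getting $\bbP(X/n-p\ge t)\le e^{-\lambda(p+t)n}(1-p+pe^{\lambda})^n$ and hence an exponent bounded by $n\big[-\lambda(p+t)+p(e^{\lambda}-1)\big]$. The one step that needs care --- and the only real obstacle --- is producing the precise Bernstein-type denominator $2p+2t/3$: the crude estimate $e^{\lambda}-1-\lambda\le \lambda^2 e^{\lambda}/2$ is too lossy, so instead I would use the geometric-series bound $e^{\lambda}-1-\lambda=\sum_{k\ge 2}\lambda^k/k!\le \frac{\lambda^2/2}{1-\lambda/3}$, which follows from $k!\ge 2\cdot 3^{k-2}$ and is valid for $0\le\lambda<3$. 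This turns the exponent into $n\big[-\lambda t+\frac{p\lambda^2/2}{1-\lambda/3}\big]$, and the choice $\lambda=t/(p+t/3)$ --- for which $1-\lambda/3=p/(p+t/3)$ --- makes the quadratic term equal exactly $\lambda t/2$, leaving exponent $-n\lambda t/2=-nt^2/(2p+2t/3)$, as required.

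Overall the argument is routine; no step is genuinely hard, and one could alternatively deduce both inequalities from Bernstein's inequality for bounded independent summands with variance proxy $p(1-p)\le p$. I would nonetheless include the self-contained two-line Chernoff derivation above, since the bookkeeping in the upper-tail optimization (the series bound and the balancing choice of $\lambda$) is the only part worth spelling out explicitly.
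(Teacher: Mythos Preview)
Your Chernoff-method proof is correct in every step: the lower-tail argument with $e^{-\lambda}-1\le-\lambda+\lambda^2/2$ and the upper-tail argument with the geometric-series bound $e^{\lambda}-1-\lambda\le\frac{\lambda^2/2}{1-\lambda/3}$ and the balancing choice $\lambda=t/(p+t/3)$ both go through exactly as you describe. Note, however, that the paper does not actually prove this statement at all---it simply quotes it as Theorem~3.3 of \cite{chung2006concentration} and uses it as a black box in the later proofs. So your self-contained derivation is strictly more than what the paper provides; there is nothing to compare against.
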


We will also make use of the following two Pinsker-like inequalities that relate the Kullback-Liebler divergence to the total variation distance.
To wit, let $P$ denote a Bernoulli distribution with probability $p$ and $Q$ a Bernoulli distribution with probability $q$.
Without loss of generality, we let $q<1/2$.
Let $D_{\KL}(P\|Q)$ denote the Kullback-Leibler divergence between $P$ and $Q$ and $d_{\TV}(P,Q)$ the total variation distance between the measures.
Then
\begin{equation} \label{eq:pinsker} 
2 d_{TV}(P,Q)^2=2(p-q)^2
\le D_{KL}(P\|Q)
\le \frac{2}{\min\{q,1-q\}} d_{TV}(P,Q)^2=\frac{2}{q}(p-q)^2.
\end{equation}
See Lemma 2.5 in \cite{tsybakov} for the lower-bound, and Lemma 4.1 in \cite{gotze2019higher} for the upper bound.

\section{Proof of Lemma \ref{lem:ml}}
\label{sec:lem}
\begin{proof}[Proof of Lemma \ref{lem:ml}]
Under the null:
\[L(A;B^{0}|\tau)=\prod_{i<j}(B^{0}_{\tau_i\tau_j})^{A_{ij}}(1-B^{0}_{\tau_i\tau_j})^{1-A_{ij}}\]
Defining the following function
\begin{equation*}
\delta_\gamma(i,j)=\begin{cases}
1 &\mbox{ if } (i,j)\in \mathcal{B}_{l}\times\mathcal{B}_{k} \mbox{ for } (l,k)\in \gamma_B\\
0 &\mbox{ otherwise }
\end{cases}
\end{equation*}
and defining $n_\gamma=\sum_{i<j}\delta_\gamma(i,j)$, we can write,
\begin{align*}
    \ell(B^{0})=&\log\left(L(A;B^{0}|\tau)\right)\\
    =&\sum_{i<j}A_{ij}\log(B^{0}_{\tau_i\tau_j})+(1-A_{ij})\log(1-B^{0}_{\tau_i\tau_j})\\
    =&\sum_{\gamma\in \Gamma}\sum_{i<j}A_{ij}\log(B^{0}_{\gamma})+(1-A_{ij})\log(1-B^{0}_{\gamma})\\
    =&\sum_{\gamma\in \Gamma}n_\gamma\hat{B}^{(0)}_{\gamma}\log(B^{0}_{\gamma})+(1-\hat{B}^{(0)}_{\gamma})n_\gamma\log(1-B^{0}_{\gamma})
\end{align*}
Where $\hat B^{0}_{\gamma}=\frac{1}{n_\gamma}\sum_{i<j}A_{ij}\delta_\gamma(i,j)$. 
Since, $\log$ is a strictly increasing function on the support of $L(A;B^{0}|\tau)$, we get
\begin{equation*}
\argmax{B^{0}\in (0,1)^{K\times K}}\{\ell(B^{0})\}
= \argmax{B^{0}\in (0,1)^{K\times K}}\{L(A;B^{0}|\tau)\}
\end{equation*}
We note that since $B^{0}_{\gamma} \in (0,1)$, we have
\begin{equation*}
\frac{\partial \ell(B^{0})}{\partial B^{0}_{\gamma}} = 0
\end{equation*}
if and only if
\begin{equation*}
B^{0}_{\gamma}
=\hat{B}^{(0)}_{\gamma}.
\end{equation*}
Writing
\begin{equation*}
    \ell(B^{0})=\sum_{\gamma\in \Gamma_T}n_\gamma\hat{B}^{(0)}_{\gamma}\log(B^{0}_{\gamma})+(1-\hat{B}^{(0)}_{\gamma})n_\gamma\log(1-B^{0}_{\gamma}),
\end{equation*}
we have
\begin{align*}
    &\max_{B^{0}\in (0,1)^{K\times K}} \ell(B^{0})\\
    &~~~~~~=\max_{B^{0}}\left\lbrace\sum_{\gamma\in \Gamma_T}n_\gamma\hat{B}^{(0)}_{\gamma}\log(B^{0}_{\gamma})+(1-\hat{B}^{(0)}_{\gamma})n_\gamma\log(1-B^{0}_{\gamma})\right\rbrace\\
&~~~~~~=\sum_{\gamma\in \Gamma_T}\max_{B^{0}_{lk}\in (0,1)}\left\lbrace n_\gamma\hat{B}^{(0)}_{\gamma}\log(B^{0}_{\gamma})+(1-\hat{B}^{(0)}_{\gamma})n_\gamma\log(1-B^{0}_{\gamma})\right\rbrace.
\end{align*}
We then conclude that the MLE under the null hypothesis of a hierarchical HSBM takes the form $\hat{B}^{0}=[\hat{B}^{0}_{lk}]$, where, if $(l,k)\in \gamma$ then
\begin{equation*}
\hat B^{0}_{lk}=\hat B^{0}_{\gamma}=\frac{1}{n_\gamma}\sum_{i<j}A_{ij}\delta_\gamma(i,j)
\end{equation*}

Under the alternative, the model is a traditional SBM, and the form of MLE is shown in (for example) \cite{bickel2009nonparametric}.
\end{proof}

\section{Proof of Theorem \ref{thm4}}
\label{sec:thm4pf}
Before proving Theorem \ref{thm4}, we first restate the Theorem for ease of reference.
\vspace{2mm}

\noindent\textbf{Theorem \ref{thm4}:} \emph{Let $\mathbb{P}_{0,S}$ indicate the probability conditioned on the null hypothesis in \ref{eqn:HSBMvar} being the true distribution, 
where these $\{s_{ij}\}$ further satisfy 
\begin{align}
\label{eq:S}
\epsilon_{S}=\min_{\gamma\in\Gamma}\min_{(i,j)\in\gamma_B}\left|s_{ij}-\frac{1}{|\gamma_B|}\sum_{(\ell,h)\in\gamma_B}s_{\ell h}\right|>0.
\end{align}
Let
\begin{align*}
    -2\hat\lambda_{\gamma}=\sum_{(l,k)\in \gamma_B}n_{lk}\log\left(\frac{1-\hat B_{lk}^{(1)}}{1-\hat B_{\gamma}^{(0)}}\right)+n_{lk}\hat B_{lk}^{(1)}\log
\left(\frac{\hat B_{lk}^{(1)}(1-\hat B_{\gamma}^{(0)})}{\hat B_{\gamma}^{(0)}(1-\hat B_{lk}^{(1)})}\right)
\end{align*}
We then have that $$
\mathbb{P}_{0,S}\left(
\frac{\epsilon_S^2n_\gamma}{9}\leq 
-2\hat\lambda_{\gamma}
\leq \frac{8 n_\gamma}{\delta}
\right)
\geq 1-2e^{-2(\min(\epsilon_S/3,\delta/2))^2n_\gamma}+2\sum_{(l,k)\in \gamma_B}e^{-2(\min(\epsilon_S/3,\delta/2))^2n_{lk}}
$$.}

\begin{proof} Let $\E_{0,S}[\cdot]$ (resp., $\mathbb{P}_{0,S}[\cdot]$) be the expectation (resp., probability) conditioned on the null from the hypotheses in Equation~\eqref{eqn:HSBMvar} and the collection of $\{s_{ij}\}_{(i,j)\in\gamma_B}$ being observed for each $\gamma\in\Gamma$ and that these $s_{ij}$'s satisfy Eq. \ref{eq:S}.
For a single collection of blocks $\gamma\in \Gamma$ and $(l,k)\in \gamma$, we have:
    \begin{align*}
        \E_{0,S}[\hat B^{(1)}_{lk}]=&B^{(0)}_{\gamma}+s_{lk}\\
        \E_{0,S}[\hat B^{(0)}_{\gamma}]=&B^{(0)}_{\gamma}+S_\gamma
    \end{align*}
    Where $S_\gamma=\frac{1}{|\gamma_B|}\sum_{(l,k)\in \gamma_B}s_{lk}$. 
    From Hoeffding's inequality applied to $\hat B^{(0)}_{\gamma}$ and $\hat B^{(1)}_{lk}$, we have that for any $t>0$, 
    \begin{align*}
         \mathbb P_{0,S}&\left(\left|\hat B^{(0)}_{\gamma}-B^{(0)}_{\gamma}-S_\gamma\right|\geq t\right)+\sum_{(l,k)\in \gamma_B}\mathbb P_{0,S}\left(\left|\hat B^{(1)}_{lk}-B^{(0)}_{\gamma}-s_{lk}\right|\geq t\right)\\
        & \leq  2e^{-2t^2n_\gamma}+2\sum_{(l,k)\in \gamma_B}e^{-2t^2n_{lk}}
    \end{align*}
    Let $t=\min(\epsilon_S/3,\delta/2)$, and note that 
    \begin{align}
        \label{ineq:Bhat1}
        \left|\hat B^{(1)}_{lk}-B^{(0)}_{\gamma}-s_{lk}\right|\leq  t,\quad\text{ and }\quad
        \left|\hat B^{(0)}_{\gamma}-B^{(0)}_{\gamma}-S_{\gamma}\right|\leq  t,
    \end{align}
    holds with probability at least 
    $$1-2e^{-2n_\gamma t^2}-
    2\sum_{(l,k)\in \gamma_B}e^{-2n_{lk}t^2}.$$
    Given the event in Eq. \ref{ineq:Bhat1},
    \begin{align*}
        \hat B^{(1)}_{lk}-B^{(0)}_{\gamma}-s_{lk}\geq -t;&\quad
        -\hat B^{(0)}_{\gamma}+B^{(0)}_{\gamma}+S_{\gamma}\geq -t;\\
            \hat B^{(1)}_{lk}-B^{(0)}_{\gamma}-s_{lk}\leq t;&\quad
        -\hat B^{(0)}_{\gamma}+B^{(0)}_{\gamma}+S_{\gamma}\leq t.
    \end{align*}
    Combining these inequalities, we see that 
    \begin{align*}
      &-2\epsilon_S/3-S_{\gamma}+s_{lk}\leq  \hat B^{(1)}_{lk}-\hat B^{(0)}_{\gamma}\leq 2\epsilon_S/3-S_{\gamma}+s_{lk}\\
      &\Rightarrow 
      \begin{cases}
      \hat B^{(1)}_{lk}-\hat B^{(0)}_{\gamma}\leq -\epsilon_S/3\,\,&\text{ if }S_{\gamma}>s_{lk}\\
      \hat B^{(1)}_{lk}-\hat B^{(0)}_{\gamma}\geq \epsilon_S/3\,\,&\text{ if }S_{\gamma}<s_{lk}\\
      \end{cases}
    \end{align*}
    and hence $|\hat B^{(1)}_{lk}-\hat B^{(0)}_{\gamma}|\geq \epsilon_S/3$.
    Applying \ref{eq:pinsker} with $P=\hat B^{(1)}_{lk},Q=\hat B^{(0)}_{\gamma}$,
    \begin{align*}
        -2\hat\lambda_{\gamma}=&\sum_{(l,k)\in \gamma_B}n_{lk} D_{\text{KL}}(P\,||\,Q)\geq \sum_{(l,k)\in \gamma_B}n_{lk}\left(\hat B^{(1)}_{lk}-\hat B^{(0)}_{\gamma}\right)^2\geq n_{\gamma}\epsilon_S^2/9
    \end{align*}
    Moreover, given the event in Eq. \ref{ineq:Bhat1}, we have that 
    $\hat B^{(0)}_{\gamma}\in(\delta/2,1-\delta/2)$
    so that Eq. \ref{eq:pinsker} also  provides
    \begin{align*}
        -2\hat\lambda_{\gamma}\leq &\sum_{(l,k)\in \gamma_B}\frac{4}{\delta} n_{lk}\left(\hat B^{(1)}_{lk}-\hat B^{(0)}_{\gamma}\right)^2\leq \frac{8}{\delta}n_\gamma
    \end{align*}
    as desired.
\end{proof}

\section{Proof of Theorem \ref{thm:penLR1}}
\label{sec:penLR1}

Consider the case where the graph under consideration comes from an RMHSBM distribution; note that below we will assume that the assumption in Eq.\@ \ref{eq:ass1} holds.
In this setting, an application of Theorem \ref{thm:chung} yields that for $n>n_0$ (letting $t=\sqrt{c_2 B_{\ell,k}^{(1)}\frac{\log n_{\ell,k}}{n_{{\ell,k}}}}$ for a constant $c_2>0$ to be set shortly)
\begin{align}
\label{eq:hsbm_sbmlower}
\mathbb{P}(\hat B_{\ell,k}^{(1)}- B_{\ell,k}^{(1)}\leq -t)&\leq 
e^{-\frac{c_2 \log n_{\ell,k}
}{2}}\\
\label{eq:hsbm_sbmupper}
\mathbb{P}(\hat B_{\ell,k}^{(1)}- B_{\ell,k}^{(1)}\geq t)&\leq 
\text{exp}\left\{
-\frac{c_2 B_{\ell,k}^{(1)}\log n_{\ell,k}}{
2B_{\ell,k}^{(1)}+
\frac{2}{3}\sqrt{c_2 B_{\ell,k}^{(1)}\frac{\log n_{\ell,k}}{n_{\ell,k}}}}
\right\}
\leq 
\text{exp}\left\{
-\frac{c_2 \log n_{\ell,k}}{
2+
\frac{2}{3}\sqrt{ c_2}}\right\}
\end{align}
Note that, if
\begin{align}
\label{eq:sbm_in_t}
    |\hat B_{\ell,k}^{(1)}- B_{\ell,k}^{(1)}|\leq \sqrt{c_2 B_{\ell,k}^{(1)}\frac{\log n_{\ell,k}}{n_{{\ell,k}}}}\text{ for all }\{\ell,k\}\in\gamma,
\end{align} 
then it holds that
$$
|\hat B_{\gamma}^{(0)}- B_{\ell,k}^{(1)}|\leq \max_{(\ell,k)\in\gamma_B} \sqrt{c_2 B_{\ell,k}^{(0)}\frac{\log n_{\ell,k}}{n_{{\ell,k}}}}
$$
as well.
Now, (where $n_{**}=\min_{\ell,k}n_{\ell,k}$)
\begin{align}   \mathbb{P}&\left(\bigcap_{\gamma\in\Gamma}\bigcap_{(\ell,k)\in\gamma_B}\left\{|\hat B_{\ell,k}^{(1)}- B_{\ell,k}^{(1)}|\leq \sqrt{c_2 B_{\ell,k}^{(1)}\frac{\log n_{\ell,k}}{n_{{\ell,k}}}} \right\} \right)\notag\\
&=\prod_{\gamma\in\Gamma}\prod_{(\ell,k)\in\gamma_B} \mathbb{P}\left(|\hat B_{\ell,k}^{(1)}- B_{\ell,k}^{(1)}|\leq \sqrt{c_2 B_{\ell,k}^{(1)}\frac{\log n_{\ell,k}}{n_{{\ell,k}}}}  \right)\notag\\
&\label{eq:sbm_alpha1}
\geq \prod_{\gamma\in\Gamma}\prod_{(\ell,k)\in\gamma_B} \left(1-2\text{exp}\left\{
-\frac{c_2 B_{\ell,k}^{(1)}\log n_{\ell,k}}{
2B_{\ell,k}^{(1)}+
\frac{2}{3}\sqrt{c_2 B_{\ell,k}^{(1)}\frac{\log n_{\ell,k}}{n_{\ell,k}}}}
\right\}\right)\\
&\label{eq:sbm_alpha2}
=\Omega\left( \text{exp}\left(-((K^*)^2+K^*)\cdot\text{exp}\left\{
-\frac{c_2 \log n_{**}}{
2+
\frac{2}{3}\sqrt{c_2}}
\right\} \right)\right)
\end{align}
We then have that with probability at least the value in Eq.\@ \ref{eq:sbm_alpha1}, it uniformly holds that (where $n_{*,\gamma}=\min_{\{\ell,k\}\in\gamma}n_{\ell,k}$)
\begin{align}
\label{eq:concB0}
\hat B_{\ell,k}^{(1)}&\in B_{\ell,k}^{(1)}\pm \sqrt{c_2 B_{\ell,k}^{(1)}\frac{\log n_{\ell,k}}{n_{{\ell,k}}}}\\
\label{eq:concB1}
\hat B_{\gamma}^{(0)}&\in B_{\ell,k}^{(1)}\pm \sqrt{c_2 B_{\ell,k}^{(1)}\frac{\log n_{*,\gamma}}{n_{{*,\gamma}}}}
\end{align}
Note that Eq. \ref{eq:concB1} implies that
\begin{align*}
\hat B_{\gamma}^{(0)}&\geq B_{\ell,k}^{(1)}- \sqrt{c_2 B_{\ell,k}^{(1)}\frac{\log n_{*,\gamma}}{n_{*,\gamma}}}
\end{align*}
and 
\begin{align*}
1-\hat B_{\gamma}^{(0)}&\geq 1-B_{\ell,k}^{(1)}- \sqrt{c_2 B_{\ell,k}^{(1)}\frac{\log n_{*,\gamma}}{n_{*,\gamma}}}
\end{align*}
Assuming Eq. \ref{eq:concB0}--\ref{eq:concB1} hold moving forward, applying the Pinsker upper bound in Eq. \ref{eq:pinsker} to Eq. \ref{eq:llr}, for $n>n_0$ we then arrive at the following which holds with probability at least $1-4K^2e^{-c_4\log n_*}$.
\begin{align*}
-2\hat\lambda_T&\leq \sum_{\gamma\in\Gamma}\sum_{(\ell,k)\in \gamma_B}n_{\ell,k}
\frac{2}{\min(\hat B_{\gamma}^{(0)},1-\hat B_{\gamma}^{(0)})}(\hat B_{\gamma}^{(0)}-\hat B_{\ell,k}^{(1)})^2\\
&\leq \sum_{\gamma\in\Gamma}\sum_{(\ell,k)\in \gamma_B}n_{\ell,k}
\frac{8c_2B_{\ell,k}^{(1)}\frac{\log n_{\ell,k}}{n_{{\ell,k}}}}{
B_{\ell,k}^{(1)}- \sqrt{c_2 B_{\ell,k}^{(1)}\frac{\log n_{*,\gamma}}{n_{*,\gamma}}}
}
\quad\text{(applying Eqs.\@ \ref{eq:concB0} and \ref{eq:concB1})}\\
&\leq \sum_{\gamma\in\Gamma}\sum_{(\ell,k)\in \gamma_B}
\frac{8c_2}{(1- \sqrt{c_2})}\log n_{\ell,k}
\end{align*}
We then have the BIC difference is equal to
\begin{align}
\notag
\hat\Delta_{T,BIC}&=-2\hat\lambda_T-\left(\sum_{\gamma\in\Gamma}(|\gamma_B|-1)\right)\log \binom{n}{2}\\
&\leq
\sum_{\gamma\in\Gamma} \left(|\gamma_B| 
\frac{8c_2}{(1- \sqrt{c_2})}-(|\gamma_B|-1)\right)\log \binom{n}{2}\\
\notag
&=\left[\left(\frac{8c_2}{(1- \sqrt{c_2})}-1\right)\frac{(K^*)^2+K^*}{2}+|\Gamma|\right]\log \binom{n}{2}
\label{eqn:LLRBIC}
\end{align}
Choosing $c_2\leq\frac{1}{16}\left(1-\frac{2|\Gamma|}{(K^*)^2+K^*}
\right)$
yields  
$\hat\Delta_{T,BIC}<0$ as desired.

\section{Proof of Theorem \ref{thm:penLR2}}
\label{sec:penLR2}

Again, applying Theorem \ref{thm:chung}, we have that with probability at least $1-4(K^*)^2e^{-\frac{c_2 \log n_{**}}{
2+
\frac{2}{3}\sqrt{c_3}}}$
it uniformly holds that 
\begin{equation} \label{eq:highprobevent} \begin{aligned}
\Bhat_{\ell,k}^{(1)}&\in B_{\ell,k}^{(1)}\pm \sqrt{c_3 B_{\ell,k}^{(1)}\frac{\log n_{\ell,k}}{n_{\ell,k}}}~\text{ and }\\
\Bhat_{\gamma}^{(1)}&\in \mathbb{E}(\Bhat_{\gamma}^{(1)})\pm \sqrt{c_3 \mathbb{E}(\Bhat_{\gamma}^{(1)})\frac{\log n_{\gamma}}{n_{\gamma}}},
\end{aligned} \end{equation}
where $n_\gamma=\sum_{(\ell,k)\in\gamma_B}n_{\ell,k}$.

If the RMHSBM is $(M,\eta)$-incompatible with the true SBM, then 
for $n>n_0$, applying the lower Pinsker bound in Equation~\eqref{eq:pinsker} to Equation~\eqref{eq:llr} yields (we remind the reader that $C>0$ is a constant which may change from line to line)
\begin{align*}
-2\lambdahat_T&\geq \sum_{\gamma\in\Gamma}\sum_{(\ell,k)\in \gamma_B}n_{\ell,k}
4(\hat B_{\gamma}^1-\hat B_{\ell,k}^0)^2=\Omega\left(n_{**} M\eta^2  -C M\log n_{**}\right) 
\end{align*}
Hence, we have the BIC difference satisfies
\begin{align*}
\hat\Delta_{T,\BIC}&=-2\hat\lambda_T-\sum_{\gamma\in\Gamma}(|\gamma_B|-1)\log \binom{n}{2}\\
&=\Omega\left(n_{**} M\eta^2  -C\left(\binom{K^*}{2}+K^*+M\right)\log n\right)
\end{align*}
Hence, if $\eta^2 M =\omega( (K^*)^2\frac{\log n}{n_{**}})$, under the high-probability event in Equation~\eqref{eq:highprobevent}, we have 
$\hat\Delta_{T,\BIC}>0$, completing the proof.

\end{document}